\documentclass{acmsmall}
\usepackage{hyperref}
\usepackage{url}
\usepackage[numbers]{natbib}
\usepackage{amssymb}
\usepackage{helvet}
\usepackage{courier}
\usepackage{graphicx}
\usepackage{ifthen}
\usepackage{amsmath,amssymb}
\usepackage{eufrak}
\usepackage {float}
\usepackage[amsmath,thmmarks]{ntheorem}
\usepackage{mathrsfs}
\usepackage{verbatim}
\usepackage{array,color}
\usepackage[table]{xcolor}
\usepackage[ruled,linesnumbered,boxed]{algorithm2e}
\usepackage{multirow}
\usepackage{url}


\newcommand{\bv}{\begin{array}}

\newcommand{\kendall}{\text{Kendall}}


\newtheorem{thm}{Theorem}
\newtheorem{dfn}{Definition}

\newtheorem{lem}{Lemma}
\newtheorem{ex}{Example}

\newtheorem{prop}{Proposition}

\newtheorem{claim}{Claim}
\renewenvironment{proof}{\noindent{\bf Proof:}{}}{\hfill $\blacksquare$ }

\newcommand{\rank}{\text{Rank}}

\newcommand{\Omit}[1]{}

\newcommand{\mn}{\mathcal N}

\newcommand{\md}{\mathfrak D}
\newcommand{\mo}{\mathcal O}
\newcommand{\mk}{\mathcal K}

\newcommand\boxit[1]{{\begin{tabular}{|@{\hspace{.5mm}}c@{\hspace{.5mm}}|}\hline$#1$\\ \hline\end{tabular}} }



\makeatletter
\def\runningfoot{\def\@runningfoot{}}

\def\firstfoot{\def\@firstfoot{}}
\makeatother

\begin{document}

\title{Allocating Indivisible Items in Categorized Domains}
\author{Erika Mackin\affil{mackie2@rpi.edu, 
Computer Science Department, Rensselaer Polytechnic Institute,
110 8th Street, Troy, NY 12180, USA
} Lirong Xia\affil{xial@cs.rpi.edu, 
Computer Science Department, Rensselaer Polytechnic Institute,
110 8th Street, Troy, NY 12180, USA
}}

\begin{abstract} 
We formulate a general class of allocation problems called {\em categorized domain allocation problems (CDAPs)}, where indivisible items from multiple categories are allocated to agents without monetary transfer and each agent gets at least one item per category. 

We focus on {\em basic CDAPs}, where the number of items in each category is equal to the number of agents. We characterize serial dictatorships for basic CDAPs by a minimal set of three axiomatic properties: {strategy-proofness}, {non-bossiness}, and {category-wise neutrality}. Then, we propose a natural extension of serial dictatorships called {\em categorial sequential allocation mechanisms (CSAMs)}, which allocate the items in multiple rounds: in each round, the active agent chooses an item from a designated category. We fully characterize the worst-case {\em rank efficiency} of CSAMs for optimistic and pessimistic agents, and provide a bound for strategic agents. We also conduct experiments to compare expected rank efficiency of various CSAMs w.r.t.~random generated data.
\end{abstract}
%

\maketitle

\section{Introduction}
Suppose we are organizing a seminar and must allocate $10$ discussion topics and $10$ dates to $10$ students. Students have heterogeneous and combinatorial preferences over (topic, date) bundles: a student's preferences over the topics may depend on the date and vice versa, because she may prefer an early date if she gets an easy topic and may prefer a late date if she gets a hard topic.

This example illustrates a common setting for allocating multiple indivisible items, which we formulate as  a {\em categorized domain}. A categorized domain contains multiple indivisible items, each of which belongs to one of the $p\geq 1$ categories. In {\em categorized domain allocation problems (CDAPs)}, we want to design a mechanism to allocate the items to agents without monetary transfer, such that each agent gets at least one item per category. In the above example, there are two categories: topics and dates, and each agent (student) must get a topic and a date. 

Many other allocation problems are CDAPs. For example, in cloud computing, agents have heterogeneous preferences over multiple types of items including CPU, memory, and storage\footnote{Suppose each type contains discrete units of resources that are essentially indivisible for operational convenience.}~\cite{Ghodsi11:Dominant,Ghodsi12:Multi,Bhattacharya13:Hierarchical}; patients must be allocated multiple types of resources including surgeons, nurses, rooms, and equipment~\cite{Huh13:Multiresource}; college students need to choose courses from multiple categories per semester, e.g.~computer science courses, math courses, social science courses, etc. 


The design and analysis of allocation mechanisms for non-categorized domains have been an active research area at the interface of computer science and economics.
In computer science, allocation problems have been studied as {\em multi-agent resource allocation}~\cite{Chevaleyre06:Issues}. 
In economics, allocation problems have been studied as {\em one-sided matching}, also known as  {\em assignment problems}~\cite{Sonmez11:Matching}. 
Previous research faces three main barriers.

$\bullet$  {\em Preference bottleneck:} When the number of items is not too small, it is impractical for the agents to express their preferences over all (exponential) bundles of items.

$\bullet$  {\em Computational bottleneck:} Even if the agents can express their preferences compactly using some preference language, computing an ``optimal'' allocation is often a hard combinatorial optimization problem. 

$\bullet$  {\em Threats of agents' strategic behavior:} An agent may have incentive to report untruthfully to obtain a more preferred bundle. This may lead to a socially inefficient allocation.


\vspace{2mm}
{\noindent\bf Our Contributions.} We initiate the study of mechanism design under the novel framework of CDAPs towards breaking the three aforementioned barriers. CDAPs naturally generalize classical non-categorized allocation problems, which are CDAPs with one category. CDAPs are our main conceptual contribution.

As a first step, we focus on {\em basic categorized domain allocation problems (basic CDAPs)}, where the number of items in each category is exactly the same as the number of agents, so that each agent gets exactly one item from each category. See e.g.~the seminar-organization example. 

Our technical contributions are two-fold. First, we characterize {\em serial dictatorships} for any basic CDAPs with at least two categories by a minimal set of three axiomatic properties: {\em strategy-proofness}, {\em non-bossiness}, and {\em category-wise neutrality}. This helps us understand the possibility of designing strategy-proof mechanisms to overcome the third barrier, i.e.~threats of agents' strategic behavior. 

Second, to overcome the preference bottleneck and the computational bottleneck, and to go beyond serial dictatorships, we propose  
{\em categorial sequential allocation mechanisms (CSAMs)}, which are a large class of indirect mechanisms that naturally extend serial dictatorships~\cite{Svensson99:Strategy-proof}, {sequential allocation protocols}~\cite{Bouveret11:General}, and the draft mechanism~\cite{Budish12:Multi}. For $n$ agents and $p$ categories, a CSAM is defined by an ordering over all (agent, category) pairs: in each round, the active agent picks an item that has not been chosen yet from the designated category. CSAMs have low communication complexity and low computational complexity.

We completely characterize the worst-case {\em rank efficiency} of CSAMs, measured by agents' {\em ranks} of the bundles they receive, for any combination of two types of {\em myopic} agents:  {\em optimistic} agents, who always choose the item in their top-ranked bundle that is still available, and {\em pessimistic} agents, who always choose the item that gives them best worst-case guarantee.
This characterization naturally leads to useful corollaries on worst-case efficiency of various CSAMs. For example, we show that while serial dictatorships with all-optimistic agents have the best worst-case utilitarian rank, they have the worst worst-case egalitarian rank (Proposition~\ref{prop:optsd}). On the other hand, balanced CSAMs with all-pessimistic agents have good worst-case egalitarian rank (Proposition~\ref{prop:bcsm}). For strategic agents, we prove an upper bound on the worst-case rank efficiency for all CSAMs, and a matching lower bound for two agents.

We also use computer simulation  to further compare expected utilitarian rank and expected egalitarian rank of some natural CSAMs where agents' preferences are generated from the {\em Mallows model}~\cite{Mallows57:Non-null}. 
We observe that serial dictatorships with all-optimistic agents have good expected utilitarian rank and bad expected egalitarian rank. On the other hand, the balanced CSAMs with all-pessimistic agents have good expected egalitarian rank.


\vspace{2mm}{\noindent\bf Related Work and Discussions.}  We are not aware of previous work that explicitly formulates the CDAP framework. Previous work on multi-type resource allocation assumes that items of the same type are interchangeable, and agents have specific preferences, e.g.~{\em Leontief preferences}~\cite{Ghodsi11:Dominant} and threshold preferences~\cite{Huh13:Multiresource}.
CDAPs are more general as agents' preferences are only required to be rankings but not otherwise restricted. 

From the modeling perspective, ignoring the categorial information, CDAPs become standard centralized multi-agent resource allocation problems. 
However, the categorial information opens more possibilities for designing natural allocation mechanisms such as CSAMs. More importantly, we believe that CDAPs provide a natural framework for cross-fertilization of ideas and techniques from other fields of preference representation and aggregation. For example, the combinatorial structure of categorized domains naturally allows agents to use graphical languages (e.g.~{\em CP-nets}~\cite{Boutilier04:CP}) to represent their preferences, which is otherwise hard~\cite{Bouveret09:Conditional}.
Approaches in {\em combinatorial voting}~\cite{Brandt13:Computational} can also be naturally considered in CDAPs. 

Technically, one-sided matching problems are basic CDAPs with one category. Our characterization of serial dictatorships for basic CDAPs are stronger than characterizations of serial dictatorships and similar mechanisms for one-sided matching~\cite{Svensson99:Strategy-proof,Papai00:Strategyproof,Papai00:Strategyproofquotas,Papai01:Strategyproof,Ehlers03:Coalitional,Hatfield09:Strategy-proof}. This is because the category-wise neutrality used in our characterization is weaker than the neutrality used in previous work.  

Our analysis of the worst-case rank efficiency of categorial sequential allocation mechanisms resembles the {\em price of anarchy}~\cite{Koutsoupias99:Worst-case}, which is defined for strategic agents together with a social welfare function that numerically evaluates the quality of outcomes. Our theorem is also related to {\em distortion} in the voting setting~\cite{Procaccia06:Distortion,Boutilier12:Optimal}, which concerns the social welfare loss caused  by agents reporting a ranking instead of a utility function. Nevertheless, our approach is significantly different because we focus on allocation problems for myopic and strategic agents, and we do not assume the existence of agents' cardinal preferences nor a social welfare function, even though our theorem can be easily extended to study worst-case  social welfare loss given a social welfare function, as in Proposition~\ref{prop:optsd} through~\ref{prop:wer}.

Finally, there is a literature in social choice on analyzing the outcomes when agents are myopic~\cite{Brams98:Paradox,Lacy00:Problem,Brams06:Better,Meir14:Local}. These papers focused on voting or cake cutting, while we focus on allocation of indivisible items.
\section{Categorized Domain Allocation Problems}

\begin{dfn} A {\em categorized domain} is composed of $p\geq 1$ categories of indivisible items, denoted by $\{D_1,\ldots,D_p\}$. In a {\em categorized domain allocation problem (CDAP)}, we want to allocate the items to $n$ agents without monetary transfer, such that each agent gets at least one item from each category.

In a {\em basic categorized domain} for $n$ agents, for each $i\leq p$, $|D_i|=n$,  $\md=D_1\times\cdots \times D_p$, and each agent's preferences are represented by a linear order over $\md$. In a {\em basic categorized domain allocation problem (basic CDAP)}, we want to allocate the items to $n$ agents without monetary tranfer, such that every agent gets exactly one item from each category.
\end{dfn} 
In this paper, we focus on basic categorized domains and basic CDAPs for {\em non-sharable} items~\cite{Chevaleyre06:Issues}, that is, each item can only be allocated to one agent.  Therefore, for all $i\leq p$, we write $D_i=\{1,\ldots,n\}$. Each element in $\md$ is called a {\em bundle}. For any $j\leq n$, let $R_j$ denote a linear order over $\md$ and let $P=(R_1,\ldots,R_n)$ denote the agents' {\em (preference) profile}. An {\em allocation} $A$ is a mapping from $\{1,\ldots,n\}$ to $\md$, such that $\bigcup_{j=1}^n[A(j)]_i=D_i$, where for any $j\leq n$ and $i\leq p$, $A(j)$ is the bundle allocated to agent $j$ and $[A(j)]_i$ is the item in category $i$ allocated to agent $j$. 
An  {\em allocation mechanism} $f$ is a mapping that takes a profile as input, and outputs an allocation. We use $f^j(P)$ to denote the bundle allocated to agent $j$  by $f$ for profile $P$.

We now define three axiomatic properties for allocation mechanisms. The first two properties are common in the literature~\cite{Svensson99:Strategy-proof}, and the third is new.

$\bullet$ A direct mechanism $f$ satisfies {\em strategy-proofness} if no agent benefits from misreporting her preferences. That is, for any profile $P$, any agent $j$, and any linear order $R_j'$ over $\md$, $f^j(P)\succ_{R_j}f^j(R_j',R_{-j})$, where $R_{-j}$ is composed of preferences of all agents except agent $j$.

$\bullet$ $f$ satisfies {\em non-bossiness} if no agent is {\em bossy}. An agent is bossy if she can report differently to change the bundles allocated to some other agents without changing her own allocation.  That is, for any profile $P$, any agent $j$, and any linear order $R_j'$ over $\md$, $[f^j(P)=f^j(R_j',R_{-j})]\Rightarrow [f(P)=f(R_j',R_{-j})]$.

$\bullet$ $f$ satisfies {\em category-wise neutrality} if after applying a permutation over the items in a given category the allocation is also permuted in the same way. That is, for any profile $P$, any category $i$, and any permutation $M_i$ over $D_i$, we have $f(M_i(P))=M_i(f(P))$, where for any bundle $\vec d\in\md$, $M_i(\vec d)=(M_i([\vec d]_i),[\vec d]_{-i})$.

When there is only one category, category-wise neutrality degenerates to the traditional neutrality for one-sided matching~\cite{Svensson99:Strategy-proof}. When $p\ge 2$, category-wise neutrality is much weaker than the traditional neutrality.

A {\em serial dictatorship} is defined by a linear order $\mk$ over $\{1,\ldots,n\}$ such that agents choose items in turns according to $\mk$. A truthful agent chooses 
her top-ranked bundle that is still available in each step.

\begin{ex}\label{ex:sd}\rm Let $n=3$ and $p=2$. $\md=\{1,2,3\}\times\{1,2,3\}$. Agents' preferences are as follows. \begin{align*}
R_1=[12\succ 21\succ 32\succ 33\succ 31\succ 22\succ 23\succ 13 \succ 11]\\
R_2=[32\succ 12\succ 21\succ 13 \succ  33 \succ 11\succ 31\succ 23\succ 22]\\
R_3=[13\succ12\succ11\succ 22\succ 32\succ 21\succ 33\succ 31\succ 23]
\end{align*}

Suppose the agents are truthful. Let $\mk=[1\rhd 2\rhd 3]$. In the first round of the serial dictatorship, agent $1$ chooses $12$; in the second round, agent $2$ cannot choose $32$ or $12$ because item $2$ in $D_2$ is unavailable, so she chooses $21$; in the final round, agent $3$ chooses $33$.$\hfill\Box$
\end{ex}

\section{An Axiomatic Characterization}

\begin{thm}\label{thm:sd}For any $p\geq 2$ and $n\geq 2$, an allocation mechanism for a basic categorized domain is strategy-proof, non-bossy, and category-wise neutral if and only if it is a serial dictatorship. Moreover, the three axioms are minimal for characterizing serial dictatorships.
\end{thm}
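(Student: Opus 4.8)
The plan is to prove the two directions of the equivalence, and then separately establish minimality by exhibiting three mechanisms, each violating exactly one axiom while satisfying the other two.

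\textbf{Easy direction (serial dictatorship $\Rightarrow$ the three axioms).} First I would verify that every serial dictatorship satisfies strategy-proofness, non-bossiness, and category-wise neutrality. Strategy-proofness is standard: when it is agent $j$'s turn, the set of still-available bundles is fixed independent of $j$'s report, so $j$ can do no better than to pick her top available bundle, which is exactly what truthful play does. Non-bossiness follows because each agent's choice only depends on what remains available when her turn comes; if agent $j$ obtains the same bundle under two reports, then the available items she leaves for later agents are identical, so by induction all later agents choose identically, and earlier agents are unaffected by $j$ entirely. Category-wise neutrality follows because relabeling the items in category $i$ by $M_i$ relabels each agent's preferences consistently, so each sequential choice is the $M_i$-image of the original choice. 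None of these three sub-arguments should be difficult.

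\textbf{Hard direction (the three axioms $\Rightarrow$ serial dictatorship).} This is where the real work lies. The goal is to extract a fixed priority order $\mk$ over agents from an arbitrary mechanism $f$ satisfying the three axioms, and then show $f$ agrees with the serial dictatorship induced by $\mk$ on every profile. The natural strategy is induction on the number of agents $n$ (and to leverage the multi-category structure $p\ge 2$, which is exactly what makes category-wise neutrality usable even though it is weaker than full neutrality). I would first identify the ``first dictator'': using category-wise neutrality together with strategy-proofness, argue that there is a single agent whose most-preferred bundle is always honored, regardless of the others' reports. The idea is that permuting items within a category must permute the outcome the same way, and combined with strategy-proofness and non-bossiness, this rigidly constrains who can be overruled. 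Once the first dictator $j_1$ is fixed, I would condition on $j_1$ receiving her top bundle, effectively removing the chosen item from each relevant category, and argue that the induced mechanism on the remaining $n-1$ agents over the reduced domain still satisfies all three axioms, so the inductive hypothesis yields a serial dictatorship on the rest. Non-bossiness is the glue that makes this reduction clean: it guarantees that fixing $j_1$'s allocation pins down a well-defined residual problem. The main obstacle, which I expect to consume most of the effort, is establishing that the dictator is \emph{consistent across all profiles} — that the same agent is favored no matter what everyone reports — using only the \emph{weak} category-wise neutrality rather than full neutrality; this is precisely the step where the $p\ge 2$ hypothesis and a careful combination of all three axioms must be deployed, and where a one-category proof would not transfer.

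\textbf{Minimality.} Finally I would show the three axioms are independent by constructing, for each axiom, a mechanism satisfying the other two but not it. For dropping category-wise neutrality, any fixed ``constant-flavored'' or item-priority mechanism that always assigns specific items can be made strategy-proof and non-bossy while obviously not neutral. For dropping non-bossiness, I would modify a serial dictatorship so that some agent's report can shuffle others' allocations without affecting her own, preserving strategy-proofness and category-wise neutrality. For dropping strategy-proofness, an appropriate rank-maximizing or ``anti-dictatorship'' rule can be shown non-bossy and category-wise neutral yet manipulable. Each construction requires only checking two positive properties and one explicit violating profile, so these are routine once the examples are chosen; the conceptual burden is entirely in the hard direction above.
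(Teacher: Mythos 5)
Your plan has the right skeleton (two directions plus three independence examples), and two pieces of it are fine: the easy direction is indeed routine, and for minimality your constant/item-priority mechanism for dropping category-wise neutrality is valid and even simpler than the paper's example (the paper instead uses a conditional serial dictatorship whose agent order depends on agent $1$'s allocation). But the hard direction, which is where all the content of the theorem lies, contains a genuine gap rather than a proof. Your step (i) — ``there is a single agent whose most-preferred bundle is always honored, regardless of the others' reports'' — is precisely the crux of the theorem restricted to the top of the priority order, and your proposal gives no argument for it; it only announces that category-wise neutrality, strategy-proofness and non-bossiness ``rigidly constrain'' the mechanism. The paper's proof shows what is actually needed to get this off the ground: a monotonicity lemma, a pushup lemma, a derivation of Pareto optimality from the three axioms, and, as the technical heart, a lemma asserting that for no profile and no pair of agents $j_1,j_2$ does there exist a bundle $\vec c$ obtained by mixing coordinates of $f^{j_1}(P)$ and $f^{j_2}(P)$ that both agents prefer to their own allocations. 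That mixing lemma (whose six-step proof is where $p\ge 2$ and the interplay of the axioms are really used) is what prevents two agents from ``sharing'' the dictatorship. Notably, the paper never proves profile-independence of the dictator head-on: it pins down $f$ on one cleverly constructed unanimous profile $P^*$ via the mixing lemma, and then transfers the conclusion to arbitrary profiles using category-wise neutrality plus monotonicity. Your proposal has no counterpart to any of this, so its pivotal step is simply deferred, not proved.

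Your step (ii), the induction on $n$, also has an unaddressed defect: the residual mechanism on the reduced domain holds agent $j_1$'s report fixed, but category-wise neutrality of $f$ only concerns permuting \emph{all} agents' preferences simultaneously, including $j_1$'s. To show the residual mechanism inherits category-wise neutrality you must argue that replacing $R_{j_1}$ by its image under a permutation fixing $j_1$'s items leaves the whole allocation unchanged — which requires non-bossiness together with monotonicity-type lemmas and the very dictator property you have not yet established; so the induction is not self-contained and would end up rebuilding the paper's lemmas anyway. Finally, your remaining two minimality examples are stated too loosely to count as constructions: the non-bossiness example needs an order-selection rule that is invariant under category-wise permutations (the paper conditions on whether the first components of agent $1$'s top two bundles agree, exactly because that property is permutation-invariant), and the rank/welfare-maximizing example needs explicit tie-breaking (the paper perturbs positional utilities by $(\frac{1}{2n^p})^j$) to be well defined and non-bossy.
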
\begin{proof} We first prove four lemmas. The first three lemmas are standard in proving characterizations for serial dictatorships. The last one (Lemma~\ref{lem:notexist}) is new, whose proof is the most involved and heavily relies on the categorial information. 

The first lemma (roughly) says that for all strategy-proof and non-bossy mechanisms $f$ and all profiles $P$, if every agent $j$ reports a different ranking without enlarging the set of bundles ranked above $f^j(P)$ (and she can shuffle the bundles ranked above $f^j(P)$ and she can shuffle the bundles ranked below $f^j(P)$), then the allocation to all agents does not change in the new profile. This resembles {\em (strong) monotonicity} in social choice.

\begin{lem}\label{lem:mono} Let $f$ be a strategy-proof and non-bossy allocation mechanism over a basic categorized domain with $p\geq 2$. For any pair of profiles $P$ and $P'$ such that for all $j\leq n$, $\{\vec d\in\md: \vec d\succ_{R_j'}f^j(P)\}\subseteq \{\vec d\in\md: \vec d\succ_{R_j}f^j(P)\}$, we have $f(P')=f(P)$.
\end{lem}

\begin{proof} We first prove the lemma for the special case where $P$ and $P'$ only differ on one agent's preferences. Let $j$ be an agent with $R_j'\neq R_j$ and $\{\vec d\in\md: \vec d\succ_{R_j'}f^j(P)\}\subseteq \{\vec d\in\md: \vec d\succ_{R_j}f^j(P)\}$. We will prove that $f^j(R_j',R_{-j})=f^j(R_j,R_{-j})$.

Suppose for the sake of contradiction $f^j(R_j',R_{-j})\neq f^j(R_j,R_{-j})$. If $f^j(R_j',R_{-j})\succ_{R_j} f^j(R_j,R_{-j})$ then it means that $f$ is not strategy-proof since $j$ has incentive to report $R_j'$ when her true preferences are $R_j$. If $f^j(R_j,R_{-j})\succ_{R_j} f^j(R_j',R_{-j})$ then $f^j(R_j,R_{-j})\succ_{R_j'} f^j(R_j',R_{-j})$, which means that when agent $j$'s preferences are $R_j'$ she has incentive to report $R_j$,. This again contradicts the assumption that $f$ is strategy-proof. Therefore $f^j(R_j,R_{-j})=f^j(R_j',R_{-j})$. 

By non-bossiness, $f(R_j,R_{-j})=f(R_j',R_{-j})$. The lemma is proved by recursively applying this argument to $j=1,\ldots,n$.
\end{proof}

For any linear order $R$ over $\md$ and any bundle $\vec d\in\md$, we say a linear order $R'$ is a {\em pushup} of $\vec d$ from $R$, if $R'$ can be obtained from $R$ by raising the position of $\vec d$ while keeping the relative positions of other bundles unchanged. The next lemma states that for any strategy-proof and non-bossy mechanism $f$, if an agent reports her preferences differently by only pushing up a bundle $\vec d$, then either the allocation to all agents does not change, or she gets $\vec d$.

\begin{lem}\label{lem:pushup} Let $f$ be a strategy-proof and non-bossy allocation mechanism over a basic categorized domain with $p\geq 2$.  For any profile $P$, any $j\leq n$, any bundle $\vec d$, and any $R_j'$ that is a pushup of $\vec d$ from $R_j$, either (1) $f(R_j',R_{-j}) = f(R)$ or (2) $f^j(R_j',R_{-j}) = \vec d$.
\end{lem}
\begin{proof} We first prove that $f^j(R_j',R_{-j})= f^j(R)$ or $f^j(R_j',R_{-j})= \vec d$. Suppose on the contrary that $f^j(R_j',R_{-j})$ is neither $f^j(R)$ nor $\vec d$. If $f^j(R_j',R_{-j})\succ_{R_j} f^j(R)$, then $f$ is not strategy-proof since when agent $j$'s true preferences are $R_j$ and other agents' preferences are $R_{-j}$, she has incentive to report $R_j'$ to make her allocation better. If $f^j(R)\succ_{R_j}f^j(R_j',R_{-j})$, then since $\vec d\neq f^j(R_j',R_{-j})$, we have $f^j(R)\succ_{R_j'}f^j(R_j',R_{-j})$. In this case when agent $j$'s true preferences are $R_j'$ and other agents' preferences are $R_{-j}$, she has incentive to report $R_j$ to make her allocation better, which means that $f$ is not strategy-proof. Therefore, $f^j(R_j',R_{-j})= f^j(R)$ or $f^j(R_j',R_{-j})= \vec d$. If $f^j(R_j',R_{-j})= f^j(R_j,R_{-j})$, then by non-bossiness $f(R_j',R_{-j})= f(R)$. This completes the proof.
\end{proof}

We next prove that strategy-proofness, non-bossiness, and category-wise neutrality altogether imply {\em Pareto-optimality}, which states that for any profile $P$, there does not exist an allocation $A$ such that all agents prefer their bundles in $A$ to their bundles in $f(P)$, and some of them strictly prefer their bundles in $A$.

\begin{lem}\label{lem:po} For any basic categorized domains with $p\geq 2$,  any strategy-proof, non-bossy, and category-wise neutral allocation mechanism is Pareto optimal.
\end{lem}

\begin{proof} We prove the lemma by contradiction. Let $f$ be a strategy-proof, non-bossy, category-wise neutral, but non-(Pareto optimal) allocation mechanism. Let $P=(R_1,\ldots,R_n)$ denote a profile such that  $f(P)$ is Pareto dominated by an allocation $A$. For any $i\leq m$, let $M_i$  denote the permutation over $D_i$ so that for every $j\leq n$, $[f^j(P)]_i$ is permuted to $[A(j)]_i$. Let $M=(M_1,\ldots, M_m)$. It follows that for all $j\leq n$, $M(f^j(P))=A(j)$.

Let $R_j'$ denote an arbitrary ranking where $A(j)$ is ranked at the top place, and $f^j(P)$ is ranked at the second place if it is different from $A(j)$. Let $R_j^*$ denote an arbitrary ranking where  $f^j(P)$ is ranked at the top place, and $A(j)$ is ranked at the second place if it is different from $f^j(P)$.  Let $P'=(R_1',\ldots,R_n')$ and $P^*=(R_1^*,\ldots,R_n^*)$. $P'$ and $P^*$ are illustrated as follows.

$$P'=\left\{\begin{array}{c}R_1':A^1\succ f^1(P)\succ \text{Others}\\ \vdots\\ R_n':A^n\succ f^n(P)\succ \text{Others}\end{array}\right\}$$

$$P^*=\left\{\begin{array}{c}R_1^*:f^1(P)\succ A^1\succ \text{Others}\\ \vdots\\ R_n^*:f^n(P)\succ A^n\succ \text{Others}\end{array}\right\}$$

Since $A$ Pareto dominates $f(P)$, by Lemma~\ref{lem:mono} we have $f(P')=f(P)$, because for any $j\leq n$, in $R'_j$ the only bundle ranked ahead of $f^j(P)$ is $A(j)$, if it is different from $f^j(P)$, and $A(j)$ is also ranked ahead of $f^j(P)$ in $R_j$. By Lemma~\ref{lem:mono} again we have $f(P^*)=f(P)$. Comparing $M(P')$ and $P^*$, we observe that the only differences are the orderings among $\md\setminus\{A(j),f^j(P)\}$. Applying Lemma~\ref{lem:mono} to $P^*$ and $M(P')$, we have that $f(M(P'))=f(P^*)=f(P)$. However, by category-wise neutrality $f(M(P'))=M(f(P'))=A$, which is a contradiction.
\end{proof}

The next lemma states that for any strategy-proof and non-bossy allocation mechanism $f$, any profile $P$, and any pair of agents $j_1,j_2$, there is no bundle $\vec c$ that only contains items allocated to agent $j_1$ and $j_2$ by $f$, such that both $j_1$ and $j_2$ prefer $\vec c$ to their bundles allocated by $f$.

    \renewcommand{\arraystretch}{1.3}
    \setlength\extrarowheight{2pt}
\begin{table*}[htp]
\tbl{Proof steps for Lemma~\ref{lem:notexist}.\label{tab:proof}}{
\begin{tabular}{ccc}
\begin{tabular}{@{}c@{}}$\boxit{\begin{array}{l}\hat R_{j_1}: \vec c\succ\boxit{\vec a}\succ \vec d\succ \vec b\succ \text{others}\\
\hat R_{j_2}:\vec c\succ \boxit{\vec b}\succ \vec a\succ \vec d\succ\text{others}\\
\text{Other $j$}: f^j(P)\succ\text{others}\end{array}}$\\
Step 1\end{tabular}

&\begin{tabular}{@{}c@{}}\boxit{\begin{array}{l}\hat R_{j_1}: \vec c\succ \boxit{\vec a}\succ \vec d\succ \vec b\succ \text{others}\\
\bar R_{j_2}:\vec c\succ \vec a\succ \boxit{\vec b}\succ\vec d\succ \text{others}\\
\text{Other $j$}: f^j(P)\succ\text{others}\end{array}}\\ Step 2\end{tabular}

&\begin{tabular}{@{}c@{}}\boxit{\begin{array}{l}\bar R_{j_1}: \vec c\succ \boxit{\vec b}\succ \vec a\succ \vec d\succ \text{others}\\
\bar R_{j_2}:\vec c\succ \boxit{\vec a}\succ {\vec b}\succ \vec d\succ \text{others}\\
\text{Other $j$}: f^j(P)\succ\text{others}\end{array}}\\ Step 3\end{tabular}\\
\\
\begin{tabular}{@{}c@{}}\boxit{\begin{array}{l}\bar R_{j_1}: \vec c\succ \boxit{\vec b}\succ \vec a\succ \vec d\succ \text{others}\\
{\mathring R}_{j_2}:\vec c\succ \boxit{\vec a}\succ \vec d\succ {\vec b}\succ \text{others}\\
\text{Other $j$}: f^j(P)\succ\text{others}\end{array}}\\ Step 4\end{tabular}

& \begin{tabular}{@{}c@{}}\boxit{\begin{array}{l}{\mathring R}_{j_1}: \vec c\succ \vec a\succ \boxit{\vec b}\succ \vec d\succ \text{others}\\
{\mathring R}_{j_2}:\vec c\succ \boxit{\vec a}\succ \vec d\succ {\vec b}\succ \text{others}\\
\text{Other $j$}: f^j(P)\succ\text{others}\end{array}}\\ Step 5\end{tabular}

&\begin{tabular}{@{}c@{}}\boxit{\begin{array}{l}{\mathring R}_{j_1}: \vec c\succ \boxit{\vec a}\succ \vec b\succ \vec d\succ \text{others}\\
{\bar R}_{j_2}:\vec c\succ \vec a\succ \boxit{\vec b}\succ \vec d\succ\text{others}\\
\text{Other $j$}: f^j(P)\succ\text{others}\end{array}}\\ Step 6\end{tabular}
\end{tabular}}
\end{table*}

\begin{lem}\label{lem:notexist} Let $f$ be a strategy-proof and non-bossy allocation mechanism over a basic categorized domain with $p\geq 2$.  For any profile $P$ and any $j_1\neq j_2\leq n$, let $\vec a = f^{j_1}(P)$ and $\vec b = f^{j_2}(P)$, there does not exist $\vec c\in \{a_1,b_1\}\times \{a_2, b_2\}\times\cdots\times \{ a_p,b_p\}$ such that $\vec c \succ_{R_{j_1}}\vec a$ and $\vec c \succ_{R_{j_2}}\vec b$, where $a_i$ is the $i$-th component of $\vec a$.
\end{lem}

\begin{proof}
Suppose for the sake of contradiction that such a bundle $\vec c$ exists. Let $\vec d$ denote the bundle such that $\vec c\cup\vec d=\vec a\cup \vec b$. More precisely, for all $i\leq m$, $\{c_i,d_i\} = \{a_i,b_i\}$. For example, if $\vec a  = 1213$, $\vec b = 2431$, and $\vec c = 1211$, then $\vec d = 2433$.

The rest of the proof derives a contradiction by proving the  series of observations  illustrated in Table~\ref{tab:proof}. In each step, we prove that the boxed bundles are allocated to agent $j_1$ and agent $j_2$ respectively, and all other agents get their top-ranked bundles.

\noindent{\bf Step~1.} Let $\hat R_{j_1}=[\vec c\succ \vec a\succ \vec d\succ \vec b \succ \text{others}]$, $\hat R_{j_2}=[\vec c\succ \vec b\succ \vec a \succ \vec d\succ \text{others}]$, where ``others'' represents an arbitrary linear order over the remaining bundles, and for any $j\neq j_1,j_2$, let $\hat R_{j}=[f^j(P)\succ\text{others}$]. By Lemma~\ref{lem:mono}, $f(\hat P)=f(P)$. 

\noindent{\bf Step~2.} Let  $\bar R_{j_2}=[\vec c\succ \vec a\succ {\vec b}\succ \vec d\succ \text{others}]$ be a pushup of $\vec a$ from $\hat R_{j_2}$. We will prove that $f(\bar R_{j_2}, \hat R_{-j_2})=f(\hat P)=f(P)$. Since $\bar R_{j_2}$ is a pushup of $\vec a$ from $\hat R_{j_2}$, by Lemma~\ref{lem:pushup}, $f^{j_2}(\bar R_{j_2}, \hat R_{-j_2})$ is either $\vec a$ or $\vec b$.  We now show that the former case is impossible. Suppose for the sake of contradiction $f^{j_2}(\bar R_{j_2}, \hat R_{-j_2})=\vec a$, then $f^{j_1}(\bar R_{j_2}, \hat R_{-j_2})$ cannot be $\vec c$, $\vec a$, or $\vec d$ since otherwise some item will be allocated twice. This means that  $f(\bar R_{j_2}, \hat R_{-j_2})$ is Pareto dominated by the allocation where $j_1$ gets $\vec d$, $j_2$ gets $\vec c$, and all other agents get their top-ranked bundles. This contradicts the Pareto-optimality of $f$ (Lemma~\ref{lem:po}). Hence $f^{j_2}(\bar R_{j_2}, \hat R_{-j_2})=\vec b=f^{j_2}(\hat P)$. By non-bossiness we have $f(\bar R_{j_2}, \hat R_{-j_2})=f(\hat P)=f(P)$.

\noindent{\bf Step~3.} Let ${\bar R}_{j_1}=[\vec c\succ {\vec b}\succ \vec a\succ \vec d\succ \text{others}]$ be a pushup of $\vec b$ from $\hat R_{j_1}$. We will prove that in $f({\bar R}_{j_1},\bar R_{j_2}, \hat R_{-\{j_1,j_2\}})$, $j_1$ gets $\vec b$, $j_2$ gets $\vec a$, and all other agents get the same items as in $f(P)$. Since ${\bar R}_{j_1}$ is a pushup of $\vec b$ from $\hat R_{j_1}$, by Lemma~\ref{lem:pushup}, $f^{j_1}({\bar R}_{j_1},\bar R_{j_2}, \hat R_{-\{j_1,j_2\}})$ is either $\vec a$ or $\vec b$. We now show that the former case is impossible. Suppose for the sake of contradiction that $f^{j_1}({\bar R}_{j_1},\bar R_{j_2}, \hat R_{-\{j_1,j_2\}})=\vec a$. By non-bossiness, $f^{j_2}({\bar R}_{j_1},\bar R_{j_2}, \hat R_{-\{j_1,j_2\}})=\vec b$. This means that $f({\bar R}_{j_1},\bar R_{j_2}, \hat R_{-\{j_1,j_2\}})$ is Pareto-dominated by the allocation where $j_1$ gets $\vec b$, $j_2$ gets $\vec a$, and all other agents get their top-ranked bundles. This contradicts the Pareto-optimality of $f$ (Lemma~\ref{lem:po}).

\noindent{\bf Step~4.} Let ${\mathring R}_{j_2}=[\vec c\succ {\vec a}\succ \vec d\succ {\vec b}\succ \text{others}]$ be a pushup of $\vec d$ from $\bar R_{j_2}$. By Lemma~\ref{lem:mono}, $f({\bar R}_{j_1},{\mathring R}_{j_2}, \hat R_{-\{j_1,j_2\}})=f({\bar R}_{j_1},\bar R_{j_2}, \hat R_{-\{j_1,j_2\}})$. 

\noindent{\bf Step~5.} Let ${\mathring R}_{j_1}=[\vec c\succ {\vec a}\succ \vec b\succ \vec d\succ \text{others}]$ be a pushup of $\vec a$ from $\bar R_{j_1}$. We will prove that $f({\mathring R}_{j_1},{\mathring R}_{j_2}, \hat R_{-\{j_1,j_2\}})=f({\bar R}_{j_1},{\mathring R}_{j_2}, \hat R_{-\{j_1,j_2\}})$. Since ${\mathring R}_{j_1}$ is a pushup of $\vec a$ from ${\bar R}_{j_1}$, by Lemma~\ref{lem:pushup}, $f^{j_1}({\mathring R}_{j_1}, {\mathring R}_{j_2}, \hat R_{-\{j_1,j_2\}})$ is either $\vec a$ or $\vec b$. We now show that the former case is impossible. Suppose for the sake of contradiction that $f^{j_1}({\mathring R}_{j_1}, {\mathring R}_{j_2}, \hat R_{-\{j_1,j_2\}})=\vec a$. Then in $f({\mathring R}_{j_1},{\mathring R}_{j_2}, \hat R_{-\{j_1,j_2\}})$, agent $j_2$ cannot get $\vec c$, $\vec a$, or $\vec d$, which means that $f({\mathring R}_{j_1},{\mathring R}_{j_2}, \hat R_{-\{j_1,j_2\}})$ is Pareto-dominated by the allocation where $j_1$ gets $\vec c$, $j_2$ gets $\vec d$, and all other agents get their top-ranked bundles. This contradicts the Pareto-optimality of $f$. Hence, $f^{j_1}({\mathring R}_{j_1},{\mathring R}_{j_2}, \hat R_{-\{j_1,j_2\}})=\vec b$. By non-bossiness  $f({\mathring R}_{j_1},{\mathring R}_{j_2}, \hat R_{-\{j_1,j_2\}})=f({\bar R}_{j_1},{\mathring R}_{j_2}, \hat R_{-\{j_1,j_2\}})$.

\noindent{\bf Step~6.} We note that ${\mathring R}_{j_1}$ is a pushup of $\vec b$ from $\hat R_{j_1}$ (and $\vec b$ is still below $\vec a$). By Lemma~\ref{lem:mono}, $f({\mathring R}_{j_1},\bar {R}_{j_2}, \hat R_{-\{j_1,j_2\}})=f(\hat{ R}_{j_1},\bar {R}_{j_2}, \hat R_{-\{j_1,j_2\}})$. We note that the right hand side is the profile in Step~2.

{\noindent\bf Contradiction.} Finally, the observations in Step~5 and Step~6 imply that when agents' preferences are as in Step~6, agent $j_2$ has incentive to report ${\mathring R}_{j_2}$ in Step~5 to improve the bundle allocated to her (from $\vec b$ to $\vec a$). This contradicts the strategy-proofness of $f$ and completes the proof of Lemma~\ref{lem:notexist}. 
\end{proof}

It is easy to check that any serial dictatorship satisfies strategy-proofness, non-bossiness and category-wise neutrality. We now prove that any mechanism satisfying the three axioms must be a serial dictatorship. Let $R^*$ be a linear order over $\md$ that satisfies the following conditions:

$\bullet$  $(1,\ldots, 1)\succ (2,\ldots, 2)\succ\cdots\succ (n,\ldots, n)$.

$\bullet$  For any $j<n$, the bundles ranked between $(j,\ldots,j)$ and $(j+1,\ldots,j+1)$ are those satisfying the following two conditions: 1) at least one component is $j$, and 2) all components are in $\{j,j+1,\ldots,n\}$. Let $B_j$ denote these bundles. That is, $B_j\subseteq \md$ and $B_j=\{\vec d:\forall l, d_l\geq j \text{ and }\exists l', d_{l'}= j\}$.

$\bullet$  For any $j$ and any $\vec d, \vec e\in B_j$, if the number of $j$'s in $\vec d$ is strictly larger than the number of $j$'s in $\vec e$, then $\vec d\succ \vec e$. 

The next claim states that $f$ agrees with a serial dictatorship on a specific profile.
\begin{claim}\label{claim:sd}Let $P^*=(R^*,\ldots,R^*)$. For any $l\leq n$, there exists $j_l\leq n$ such that $f^{j_l}(P^*)=(l,\ldots, l)$.\end{claim}

\begin{proof} The claim is proved by induction on $l$. When $l=1$, for the sake of contradiction suppose there is no $j_l$ with $f^{j_l}(P^*)=(1,\ldots,1)$. Then there exist a pair of agents $j$ and $j'$ such that both $\vec a = f^{j}(P^*)$ and $\vec b = f^{j'}(P^*)$ contain $1$ in at least one category. 

Let $\vec c$ be the bundle obtained from $\vec a$ by replacing items in categories where $\vec b$ takes $1$ to $1$. More precisely, we let $\vec c = (c_1,\ldots, c_p)$, where 
$$c_i=\left\{\begin{array}{cl}1&\text{if }a_i=1\text{ or }b_i=1\\a_i&\text{otherwise}\end{array}\right.$$

It follows that in $R^*$, $\vec c\succ_{R^*} \vec a$ and $\vec c\succ_{R^*} \vec b$ since the number of $1$'s in $\vec c$  is strictly larger than the number of $1$'s in $\vec a$ or $\vec b$. By Lemma~\ref{lem:notexist}, this contradicts the assumption that $f$ is strategy-proof and non-bossy. Hence there exists $j_1\leq n$ with $f^{j_1}(P^*)=(1,\ldots,1)$.

Suppose the claim is true for $l\leq l'$. We next prove that there exists $j_{l'+1}$ such that $f^{j_{l'+1}}(P^*)=(l'+1,\ldots,l'+1)$. This follows after a similar reasoning to the $l=1$ case. Formally, suppose for the sake of contradiction there does not exist such a $j_{l'+1}$.  Then, there exist two agents who get $\vec a$ and $\vec b$ in $f(P^*)$ such that both $\vec a$ and $\vec b$ contain $l'+1$ in at least one category. By the induction hypothesis, items $\{1,\ldots,l'\}$ in all categories have been allocated, which means that all components of $\vec a$ and $\vec b$ are at least as large as $l'+1$. Let $\vec c$ be the bundle obtained from $\vec a$ by replacing items in all categories where $\vec b$ takes $l'+1$ to $l'+1$. We have $\vec c\succ_{R^*} \vec a$ and $\vec c\succ_{R^*}\vec b$, leading to a contradiction by Lemma~\ref{lem:notexist}. Therefore, the claim holds for $l=l'+1$. This completes the proof of Claim~\ref{claim:sd}.
\end{proof}

W.l.o.g.~we let $j_1=1$, $j_2=2$, $\ldots$, $j_n=n$ denote the agents in Claim~\ref{claim:sd}. For any profile $P'=(R_1',\ldots,R_n')$, we define $n$ bundles as follows. Let $\vec {d^1}$ denote the top-ranked bundle in $R_{1}'$, and for any $l\geq 2$, let $\vec {d^l}$ denote agent $l$'s top-ranked available bundle given that items in $\vec {d^1},\ldots,\vec{d^{l-1}}$ have already been allocated. That is, $\vec{d^l}$ is the most preferred bundle in $\{\vec d:\forall l'<l, \vec d\cap \vec {d^{l'}}=\emptyset\}$ according to $R_{l}'$. In other words, $\vec d^1,\ldots, \vec d^n$ are the bundles allocated to agents $1$ through $n$ by the serial dictatorship $1\rhd 2\cdots\rhd n$. We next prove that this is exactly the allocation by $f$.

For any $i\leq m$, we define a category-wise permutation $M_i$ such that for all $l\leq n$, $M_i(l)=[\vec {d^l}]_i$, where we recall that $[\vec {d^l}]_i$ is the item in the $i$-th category in $\vec{d^l}$. Let $M=(M_1,\ldots,M_m)$. It follows that for all $l\leq n$, $M(l,\ldots,l)=\vec {d^l}$. By category-wise neutrality and Claim~\ref{claim:sd}, in $f(M(P^*))$ agent $l$ gets $M(f^l(P^*))=\vec{d^l}$. 

Comparing $M(P^*)$ to $P'$, we notice that for all $l\leq n$ and all bundles $\vec e$, if $ \vec {d^l}\succ_{M(R^*)}\vec e$ then $\vec {d^l}\succ_{R_l'} \vec e$.  This is because if there exists $\vec e$ such that  $ \vec {d^l}\succ_{M(R^*)}\vec e$ but $\vec e\succ_{R_l'} \vec {d^l}$, then $\vec e$ is still available after $\{\vec {d^1},\ldots,\vec {d^{l-1}}\}$ have been allocated, and $\vec e$ is ranked higher than $\vec {d^l}$ in $R_l'$. This contradicts the selection of $\vec {d^l}$. By Lemma~\ref{lem:mono}, $f(P')=f(M(P^*))=M(f(P^*))$, which proves that $f$ is the serial dictatorship w.r.t.~the order $1\rhd 2\rhd\cdots\rhd n$. 

Next, we show that strategy-proofness, non-bossiness, and category-wise neutrality are a minimal set of properties that characterize serial dictatorships.

\begin{paragraph}{\bf Strategy-proofness is necessary} Consider the allocation mechanism that maximizes the social welfare w.r.t.~the following utility functions. For any $i\leq n^p$ and $j\leq n$, the bundle ranked at the $i$-th position in agent $j$'s preferences gets $(n^p-i)(1+(\frac{1}{2n^p})^j)$ points.\footnote{The $(\frac{1}{2n^p})^j$ terms in the utility functions are only used to avoid ties in allocations. In fact, any utility functions where there are no ties satisfy non-bossiness and category-wise neutrality, but some of them are equivalent to serial dictatorships, which are the cases we want to avoid in our proof.} It is not hard to check that for any pair of different allocations, the social welfares are different. It follows that this allocation mechanism satisfies non-bossiness. This is because if agent $j$'s allocation is the same when only she reports differently, then the set of items left to the other agents is the same, which means that the allocation to the other agents by the mechanism is the same. Since the utility of a bundle only depends on its position in the agents' preferences rather than the name of the bundle, the allocation mechanism satisfies category-wise neutrality. This mechanism is not a serial dictatorship. To see this, consider the basic categorized domain with $p=n=2$, $R_1'=[11\succ 12\succ 22\succ 21]$, and $R_2'=[12\succ 21\succ 11\succ 22]$. A serial dictatorship will either give $11$ to agent $1$ and give $22$ to agent $2$, or give $21$ to agent $1$ and give $12$ to agent $2$, but the allocation that maximizes social welfare w.r.t.~the utility function described above is to give $12$ to agent $1$ and give $21$ to agent $2$.
\end{paragraph}

\begin{paragraph}{\noindent\bf Non-bossiness is necessary} Consider the following ``conditional serial  dictatorship'': agent $1$ chooses  her favorite bundle in the first round, and the order over the remaining agents $\{2,\ldots,n\}$ depends on agent $1$'s preferences in the following way: if the first component of agent $1$'s second-ranked bundle is the same as the first component of her top choice, then the order over the rest of agents is $2\rhd 3\rhd\cdots\rhd n$; otherwise it is $n\rhd n-1\rhd\cdots\rhd 2$. It is not hard to verify that this mechanism satisfies strategy-proofness and category-wise neutrality, and is not a serial dictatorship (where the order must be fixed before seeing the profile).\end{paragraph}

\begin{paragraph}{\bf Category-wise neutrality is necessary} Consider the following ``conditional serial  dictatorship'': agent $1$ chooses her favorite bundle in the first round, and the order over agents $\{2,\ldots,n\}$ depends on the allocation to agent $1$ in the following way: if agent $1$ gets $(1,\ldots, 1)$, then the order over the rest of agents is $2\rhd 3\rhd\cdots\rhd n$; otherwise it is $n\rhd n-1\rhd\cdots\rhd 2$. It is not hard to verify that this mechanism satisfies strategy-proofness and non-bossiness, and is not a serial dictatorship.
\end{paragraph}
\end{proof}

{\bf Remarks.} The theorem is somewhat negative, meaning that we have to sacrifice one of strategy-proofness, category-wise neutrality, or non-bossiness. Among the three axiomatic properties, we feel that non-bossiness is the least natural one.

\section{Categorial Sequential Allocation Mechanisms}\label{sec:sequential} 

Given a linear order $\mo$ over $\{1,\ldots,n\}\times\{ 1,\ldots, p\}$, the {\em categorial sequential allocation mechanism (CSAM)} $f_\mo$ allocates the items in $np$ steps as illustrated in Protocol~\ref{alg:protocol}. In each step $t$, suppose the $t$-th element in $\mo$ is $(j,i)$, (equivalently, $t=\mo^{-1}(j,i)$). Agent $j$ is called the {\em active agent} in step $t$ and she chooses an item $d_{j,i}$ that is still available from $D_i$. Then, $d_{j,i}$ is broadcast to all agents and we move on to the next step. 
\renewcommand*{\algorithmcfname}{Protocol}
\IncMargin{1em}
\begin{algorithm}[!htp]
\SetAlgoLined
\LinesNumbered
\caption{Categorial sequential allocation mechanism (CSAM) $f_\mo$.\label{alg:protocol}}\DontPrintSemicolon

\Indm
\KwIn{An order $\mo$  over $\{1,\ldots,n\}\times\{ 1,\ldots, p\}$.}
\Indp
Broadcast $\mo$ to all agents.\label{step:broadcast}\;
\For{$t=1$ to $np$\label{step:start}}{Let $(j,i)$ be the $t$-th element in $\mo$.\;
Agent $j$ chooses an available item $d_{j,i}\in D_i$.\;
Broadcast $d_{j,i}$ to all agents.\label{step:end}}
\end{algorithm}

In CSAMs, in each step the active agent must choose an item from the designated category. Hence, CSAMs are different from sequential allocation protocols~\cite{Bouveret11:General} and the draft mechanism~\cite{Budish12:Multi}, where in each step the active agent can choose any available item from any category.

\begin{ex}\label{ex:csm}\rm The serial dictatorship w.r.t.~$\mk=[j_1\rhd \cdots\rhd j_n]$ is a CSAM w.r.t.~$(j_1,1)\rhd (j_1,2)\rhd\cdots\rhd(j_1,p)\rhd\cdots\rhd(j_n,1)\rhd (j_n,2)\rhd\cdots\rhd(j_n,p)$. 

For any even number $p$, given any linear order $\mk=[j_1\rhd \cdots\rhd j_n]$ over the agents, we define the {\em balanced CSAM} to be the mechanism where agents choose items in $p$ phases, such  that for each $i\leq p$, in phase $i$ all agents choose from $D_i$ w.r.t.~$\mk$ if $i$ is odd, and w.r.t.~inverse $K$ if $i$ is even.


For example, when $n=3$, $p=2$, and $\mk=[1\rhd 2\rhd 3]$, the balanced CSAM uses the order $(1,1)\rhd (2,1)\rhd (3,1)\rhd (3,2)\rhd (2,2)\rhd (1,2)$. $\hfill\Box$
\end{ex}


Similar to sequential allocations~\cite{Bouveret11:General}, CSAMs can be implemented in a distributed manner. Communication cost for CSAMs is much lower than for direct mechanisms, where agents report their preferences in full to the center, which requires $\Theta(n^{p}p\log n)$ bits per agent, and thus the total communication cost is $\Theta(n^{p+1}p\log n)$. For CSAMs, the total communication cost of Protocol~\ref{alg:protocol} is $\Theta(n^2p\log n+ np(n\log n))=\Theta(n^2p\log np)$, which has  $\Theta(n^{p-2}\cdot \frac{\log n}{\log n+\log p})$ multiplicative saving. In light of this, CSAMs preserve more privacy as well.

We will consider two types of myopic agents. 
For any $1\leq i\leq p$, we let $D_{i,t}$ denote the set of available items in $D_{i}$ at the beginning of round $t$.

\begin{itemize}
\item {\em Optimistic} agents. An optimistic agent chooses the item in her top-ranked bundle that is still available, given the items she chose in previous steps.
\item {\em Pessimistic} agents. A pessimistic agent $j$ in round $t$ chooses an item $d_{j,i}$ from $D_{i,t}$,   such that for all $d_i'\in D_{i,t}$ with $d_i'\neq d_{j,i}$, agent $j$ prefers the worst available bundle whose $i$-th component is $d_{j,i}$ to the worst available bundle whose $i$-th component is $d_i'$.

\end{itemize}
In this paper, we assume that whether an agent is optimistic or pessimistic is  fixed before applying a CSAM. 
\begin{ex}\label{ex:sa}\rm 
Let $n=3$, $p=2$. Consider the same profile as in Example~\ref{ex:sd}, which can be simplified as follows.
$$\begin{array}{rl}
\text{Agent 1  (optimistic):}&12\succ 21\succ \text{\rm others}\succ 11\\
\text{Agent 2  (optimistic):}&32\succ\text{\rm others}\succ 22\\
\text{Agent 3  (pessimistic):}&13\succ \text{\rm others}\succ 33\succ 31\succ 23
\end{array}$$

Let $\mo=[(1,1)\rhd (2,2)\rhd (3,1)\rhd (3,2)\rhd (2,1)\rhd (1,2)]$. Suppose agent 1 and agent 2 are optimistic and agent 3 is pessimistic. When $t=1$, agent $1$ (optimistic) chooses item $1$ from $D_1$. When $t=2$, item  $32$ is the top-ranked available bundle for agent $2$ (optimistic), so she chooses $2$ from $D_2$. When $t=3$, the available bundles are $\{2,3\}\times\{1,3\}$. If agent $3$ chooses $2$ from $D_1$, then the worst-case available bundle is $23$, and if agent $3$ chooses $3$ from $D_1$, then the worst-case available bundle is $31$. Since agent $3$ prefers $31$ to $23$, she chooses $3$ from $D_1$. When $t=4$,  agent $3$ chooses $3$ from $D_2$. When $t=5$, agent $2$ choses $2$ from $D_1$ and when $t=6$, agent $1$ choses $1$ from $D_2$. Finally, agent $1$ gets $11$, agent $2$ gets $22$, and agent $3$ gets $33$. $\hfill\Box$

\end{ex}


\section{Rank Efficiency of CSAMs for Myopic Agents}\label{sec:eff}
In this section, we 
focus on characterizing the {\em rank efficiency} of CSAMs measured by agents' ranks of the bundles they receive.\footnote{This is different from the ordinal efficiency for randomized allocation mechanisms~\cite{Bogomolnaia01:New}.}  For any linear order $R$ over $\md$ and any bundle $\vec d$, we let $\rank(R,\vec d)$ denote the {rank} of $\vec d$ in $R$, such that the highest position has rank $1$ and the lowest position has rank $n^p$. 
Given a CSAM $f_\mo$, we introduce the following notation for any $j\leq n$.

$\bullet$  Let $\mo_j$ denote the linear  order over the categories $\{1,\ldots,p\}$ according to which agent $j$ chooses items from $\mo$. 

$\bullet$  For any $i\leq p$, let $k_{j,i}$ denote the number of  items in $D_i$ that are still available right before agent $j$ chooses from $D_i$. Formally, $k_{j,i}=1+|\{(j',i):(j,i)\rhd_\mo (j',i)\}|$. 

$\bullet$  Let $K_j$ denote the smallest index in $\mo_j$ such that no agent can ``interrupt'' agent $j$ from choosing all items in her top-ranked bundle that is available in round $(j,\mo_j(K_j))$. Formally, $K_j$ is the smallest number such that for any $l$ with $K_j<l\leq p$, between the round when agent $j$ chooses an item from category $\mo_j(K_j)$ and the round when agent $j$ chooses an item from category $\mo_j(l)$, no agent chooses an item from category $\mo_j(l)$. We note that $K_j$ is defined only by $\mo$ and is thus independent of agents' preferences.

\begin{ex}\label{ex:o}\rm 
Let $\mo^*=[(1,1)\rhd (1,2)\rhd (1,3)\rhd (2,1)\rhd (2,2)\rhd (2,3)\rhd (3,1)\rhd (3,2)\rhd (3,3)]$. That is, $f_{\mo^*}$ is a serial dictatorship. Then $\mo_1^*=\mo_2^*=\mo_3^*=1\rhd 2\rhd 3$. $K_1=K_2=K_3=1$. $k_{1,1}=k_{1,2}=k_{1,3}=3$, $k_{2,1}=k_{2,2}=k_{2,3}=2$, $k_{3,1}=k_{3,2}=k_{3,3}=1$.

Let $\mo$ be the order in Example~\ref{ex:sa}, that is, $\mo =[(1,1)\rhd (2,2)\rhd (3,1)\rhd (3,2)\rhd (2,1)\rhd (1,2)]$. 

$\mo_1=1\rhd 2$. $K_1=2$ since $(2,2)$ is between $(1,1)$ and $(1,2)$ in $\mo$. $k_{1,1}=3$, $k_{1,2}=1$.

$\mo_2=2\rhd 1$. $K_2=2$ since $(3,1)$  is between $(2,2)$ and $(2,1)$. $k_{2,1}=1$, $k_{2,2}=3$.

$\mo_3=1\rhd 2$. $K_3=1$ since between $(3,1)$ and $(3,2)$ in $\mo$, no agent chooses an item from $D_2$. $k_{3,1}=k_{3,2}=2$.$\hfill\Box$
\end{ex}

\begin{prop}\label{prop:worstcase} For any CSAM $f_\mo$, any combination of optimistic and pessimistic agents, any $j\leq n$, and any profile:
\begin{itemize}
\item {\bf Upper bound for optimistic agents:} if $j$ is optimistic, then the rank of the bundle allocated to her is at most $n^p+1-\prod_{l=K_j}^pk_{j,\mo_j(l)}$.

\item {\bf Upper bound for pessimistic agents:} if $j$ is pessimistic, then the rank of the bundle  allocated to her is at most $n^p-\sum_{l=1}^p(k_{j,\mo_j(l)}-1)$.
\end{itemize}
\end{prop}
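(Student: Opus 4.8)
The plan is to prove the two bounds separately, keeping the agent $j$ and the profile fixed as in the statement. In both cases the core idea is identical: exhibit explicitly a large collection of bundles that provably rank below the bundle $\vec b$ that $j$ finally receives, and then read off the bound as $n^p$ minus the size of that collection. The quantities $K_j$ and $k_{j,i}$ are exactly what let me count these collections. A fact I will use repeatedly is that availability is monotone decreasing in time, so any item $j$ chooses at a later round was already available at any earlier round.

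For the optimistic bound I would first establish that $K_j$ marks the point from which $j$'s future availabilities are frozen. By definition of $K_j$, between the round in which $j$ chooses from category $\mo_j(K_j)$ and the round in which she chooses from any later category $\mo_j(l)$ with $l>K_j$, no agent removes an item from $\mo_j(l)$; since $j$'s own later choices lie in distinct categories, the available set of each category $\mo_j(K_j),\dots,\mo_j(p)$ is unchanged throughout the rest of $j$'s turns. Hence at the moment $j$ chooses from $\mo_j(K_j)$, the bundles consistent with her already-fixed components (in $\mo_j(1),\dots,\mo_j(K_j-1)$) form a product set of size exactly $\prod_{l=K_j}^{p}k_{j,\mo_j(l)}$. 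Because an optimistic agent always moves toward her top-ranked available bundle and these availabilities no longer change, she realizes her most-preferred bundle $\vec d^{*}$ of this product set. The other $\prod_{l=K_j}^{p}k_{j,\mo_j(l)}-1$ bundles of the set rank below $\vec d^{*}$ in $R_j$, so its rank is at most $n^p-\left(\prod_{l=K_j}^{p}k_{j,\mo_j(l)}-1\right)$, which is the claimed bound.

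For the pessimistic bound, write $i_1=\mo_j(1),\dots,i_p=\mo_j(p)$ and let $\vec b$ be the bundle $j$ receives. When $j$ chooses $b_{i_m}$ from category $i_m$, she rejects $k_{j,i_m}-1$ available items; for each rejected item $d'$ let $V_{m,d'}$ be the worst available bundle, consistent with $j$'s earlier choices, whose $i_m$-component is $d'$. The pessimistic rule says precisely that the worst consistent bundle with $i_m$-component $b_{i_m}$ is preferred to $V_{m,d'}$; since $\vec b$ is itself a consistent bundle with $i_m$-component $b_{i_m}$ (its later components are available at that round because $j$ does choose them subsequently), monotonicity and transitivity give $\vec b\succeq_{R_j}(\text{worst such bundle})\succ_{R_j}V_{m,d'}$, hence $\vec b\succ_{R_j}V_{m,d'}$. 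It remains to check that these $\sum_{l=1}^{p}(k_{j,\mo_j(l)}-1)$ bundles are pairwise distinct: $V_{m,d'}$ agrees with $\vec b$ on $i_1,\dots,i_{m-1}$ but differs on $i_m$, so two witnesses from different categories first disagree with $\vec b$ on different coordinates and cannot coincide, while witnesses from the same category differ in their $i_m$-component. This produces $\sum_{l=1}^{p}(k_{j,\mo_j(l)}-1)$ distinct bundles strictly below $\vec b$, giving the stated bound.

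I expect the main obstacle to lie in the pessimistic case, specifically in pinning down the correct notion of ``available bundle consistent with $j$'s choices'' and verifying the two facts that make the count work: that $\vec b$ is itself available at each of $j$'s rounds (so that $\vec b$ dominates the worst consistent bundle for her own choice), and that the witnesses $V_{m,d'}$ are genuinely distinct across categories. Both reduce to the time-monotonicity of availability, which I would state explicitly before assembling the count. The optimistic case is comparatively routine once the freezing property of $K_j$ is established, which is the one place where the precise definition of $K_j$ is essential.
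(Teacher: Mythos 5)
Your proof is correct and takes essentially the same route as the paper's: for optimistic agents, the product set of $\prod_{l=K_j}^p k_{j,\mo_j(l)}$ bundles that is frozen from the round $\mo^{-1}(j,\mo_j(K_j))$ onward, and for pessimistic agents, the $\sum_{l=1}^p(k_{j,\mo_j(l)}-1)$ distinct worst-case witness bundles ranked below the allocated bundle $\vec b$. You actually make explicit two points the paper's proof leaves terse (that $\vec b$ itself is available at each of $j$'s rounds, so it dominates the worst consistent bundle for her own choice, and the pairwise distinctness of the witnesses), which only strengthens the write-up.
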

\begin{proof} Equivalently, we prove that for any optimistic agent, the bundle allocated to her is ranked no lower than the $(\prod_{l=K_j}^pk_{j,\mo_j(l)})$-th position from the bottom, and for any pessimistic agent, the bundle allocated to her is ranked no lower than the $(1+\sum_{l=1}^p(k_{j,\mo_j(l)}-1))$-th position from the bottom.

W.l.o.g.~let $\mo_j=1\rhd 2\rhd \cdots \rhd p$. That is, agent $j$ chooses items from categories $1,\ldots,p$ in sequence in the sequential allocation. This means that in this proof, for any $l\leq p$, $\mo_j(l)=l$. We first prove the proposition for an optimistic agent $j$.  In the beginning of round $t_j=\mo^{-1}(j,{K_j})$ in Algorithm~\ref{alg:protocol}, agent $j$ has already chosen items from $D_1,\ldots,D_{K_j-1}$, and is ready to choose an item from $D_{K_j}$. We recall that  $D_{l,t}$ is the set of remaining items in $D_{l}$ at the beginning of round $t$.  By definition, $k_{j,l}=|D_{l,t_j}|$. Let $(d_{j,1},\ldots,d_{j,{K_j-1}})\in D_1\times\cdots\times D_{K_j-1}$ denote the items agent $j$ has chosen in previous rounds. It follows that at the beginning of the round $t_j$, the following $\prod_{l=K_j}^pk_{j, l}$ bundles are available for agent $j$: 
$$\md_{j}=(d_{j,1},\ldots,d_{j,{K_j-1}})\times \prod_{l=K_j}^pD_{l,t_j}$$

We now show that an optimistic agent $j$ is guaranteed to obtain her top-ranked bundle in $\md_{j}$. Intuitively this holds because by the definition of $K_j$, for any $l\geq K_j$, when it is  agent $j$'s round to choose an item from $D_{l}$, the $l$-th component of her top-ranked bundle in $\md_{j}$ is always available. Formally, let $\vec d_j = (d_{j,1},\ldots, d_{j,p})$ denote agent $j$'s top-ranked bundle in $\md_{j}$. We prove that agent $j$ will choose $d_{j,l}$ from $D_l$ in round $\mo^{-1}(j,l)$ by induction on $l$.  The base case  $l=K_j$ is straightforward.
Suppose she has chosen $d_{K_j}, d_{K_j+1},\ldots,d_{l'}$ for some $l'\ge K_j$. Then in round $\mo^{-1}(j,{l'+1})$ when agent $j$ is about to choose an item from $D_{l'+1}$, the following bundles are available: 
$$(d_{j,1},\ldots,d_{j,l'})\times \prod_{l=l'+1}^pD_{l,t_j}$$

This is because by the induction hypothesis $(d_{j,1},\ldots,d_{j,l'})$ have been chosen by agent $j$ in previous rounds. Then, by the definition of $K_j$, for any $l\geq l'+1$ no agent chooses an item from $D_l$ between round $t_j=\mo^{-1}(j,K_j)$ and round $\mo^{-1}(j,l')$. Hence the remaining items in $D_l$ are still the same as those in round $t_j$. This means that $\vec d_j\in (d_{j,1},\ldots,d_{j,l'})\times \prod_{l=l'+1}^pD_{l,t_j}\subseteq \md_{j}$. Therefore, $\vec d_j$ is still agent $j$'s top-ranked available bundle in the beginning of round $\mo^{-1}(j,l')$, when she is about to choose an item from $D_{l'+1}$. Hence agent $j$ will choose $d_{j,l'+1}$. This proves the claim for $l=l'+1$, which means that it holds for all $l\leq p$. Therefore, agent $j$ is allocated $\vec d_j$ by the sequential allocation protocol. We note that $|\md_{j}|=\prod_{l=K_j}^pk_{j,l}$. This proves the proposition for optimistic agents.

We next prove the proposition for an pessimistic agent $j$. Let $\vec d_j=(d_{j,1},\ldots,d_{j,p})$ denote her allocation by the sequential allocation protocol. Since agent $j$ is pessimistic, for any $1\leq l\leq p$, in round $t^*=\mo^{-1}(j,l)$ agent $j$ chose $d_{j,l}$ from $D_{l,t^*}$, we must have that for all $d_{l}'\in D_{l,t^*}$ with $d_{l}'\neq d_{j,l}$, there exists a bundle $(d_{j,1},\ldots, d_{j,l-1}, d_{l}',\ldots, d_{p}')$ that is ranked below $\vec d_j$. These bundles are all different and the number of all such bundles is $\sum_{l=1}^p(k_{j,l}-1)$, which proves the proposition for pessimistic agents.
\end{proof}

We note that Proposition~\ref{prop:worstcase} works for any combination of optimistic and pessimistic agents, which is much more general than the setting with all-optimistic agents and the setting with all-pessimistic agents. In addition, once the CSAM and the properties of the agents (that is, whether each agent is optimistic or pessimistic) is given, the bounds hold for all preference profiles.

Our main theorem in this section states that, surprisingly, for all combinations of optimistic and pessimistic agents, all upper bounds described in Proposition~\ref{prop:worstcase} can be matched in a same profile. Even more surprisingly, for the same profile there exists an allocation where almost all agents get their top-ranked bundle, and the only agent who may not get her top-ranked bundle gets her second-ranked bundle. Therefore, the theorem not only provides a worst-case analysis in the absolute sense in that all upper bounds in Proposition~\ref{prop:worstcase} are matched in the same profile, but also in the comparative sense w.r.t.~the optimal allocation of the profile. 
\begin{thm}\label{thm:matching} For any CSAM $f_\mo$ and any combination of optimistic and pessimistic agents, there exists a profile $P$ such that for all $j\leq n$:
\begin{enumerate}
\item if agent $j$ is optimistic, then the rank of the bundle allocated to her is $n^p+1-\prod_{l=K_j}^pk_{j,\mo_j(l)}$;
\item if agent $j$ is pessimistic, then  the rank of the bundle  allocated to her is $n^p-\sum_{l=1}^p(k_{j,\mo_j(l)}-1)$;
\item there exists an allocation where at least $n-1$ agents get their top-ranked bundles, and the remaining agent gets her top-ranked or second-ranked bundle. 
\end{enumerate}
\end{thm}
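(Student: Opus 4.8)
The plan is constructive. Since Proposition~\ref{prop:worstcase} already supplies the matching upper bounds, it suffices to exhibit a single profile $P$ that simultaneously drives every agent to the very bottom of her guaranteed range and that, as a profile, admits a near-perfect allocation. I would fix the CSAM $f_\mo$ and the optimistic/pessimistic type pattern, and then build $P$ around a prescribed run: first name the bundle $\vec b_j$ that the protocol will hand agent $j$, chosen so that $\{\vec b_1,\ldots,\vec b_n\}$ is a valid allocation and so that $\vec b_j$ sits at rank $n^p+1-\prod_{l=K_j}^p k_{j,\mo_j(l)}$ (if $j$ is optimistic) or $n^p-\sum_{l=1}^p(k_{j,\mo_j(l)}-1)$ (if $j$ is pessimistic) in the order I am about to write for $j$. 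As in the proof of Proposition~\ref{prop:worstcase}, when analyzing a fixed agent I relabel her categories so that $\mo_j=1\rhd\cdots\rhd p$.

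The heart of the construction is specifying each agent's full linear order so that her myopic choice is forced in every round. For an optimistic agent $j$ I arrange the order so that the set $\md_j=(d_{j,1},\ldots,d_{j,K_j-1})\times\prod_{l=K_j}^p D_{l,t_j}$ identified in the proof of Proposition~\ref{prop:worstcase} is exactly her bottom $\prod_{l=K_j}^p k_{j,\mo_j(l)}$ bundles; then her top available bundle inside $\md_j$ is the one at the claimed worst rank, while every strictly better bundle has already been blocked by the time she chooses. For a pessimistic agent $j$ I instead place, for each category $i$ and each of the $k_{j,i}-1$ rival items, a distinct ``witness'' bundle immediately below $\vec b_j$, so that in every round the worst-case comparison selects precisely the intended item and the $\sum_{l=1}^p(k_{j,\mo_j(l)}-1)$ witnesses occupy exactly the positions beneath $\vec b_j$.

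I would then prove by induction on the rounds $t=1,\ldots,np$ that the run of $f_\mo$ on $P$ really does allocate $\vec b_j$ to each $j$: at step $t$ the availability set $D_{i,t}$ is exactly what the induction guarantees, the active agent's type-appropriate choice is forced to the prescribed item, and the availability hypothesis is thereby maintained. Together with the structural placement of $\vec b_j$ this yields items~(1) and~(2). For item~(3) I would exhibit the efficient allocation by giving each agent a decoy bundle placed at (or just below) the top of her order: taking these decoys to be a cyclic shift of $\{\vec b_k\}$ keeps them a valid allocation while making each one agent $j$'s top choice, so almost all agents reach rank $1$; the one unavoidable collision between this cyclic target and a worst-case witness bundle is absorbed by letting the remaining agent settle for her second-ranked bundle, which is exactly the hedge in the statement.

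The main obstacle will be the global consistency of the preference orders. Each agent owns only a single linear order over $\md$, yet that order must simultaneously (i) force the correct item in each of her $p$ rounds even though the availability is dictated by the other agents' earlier moves, (ii) make the guaranteed rank \emph{tight} rather than merely an upper bound, and (iii) leave room at the top for a decoy that participates in the efficient allocation. Threading all three requirements through one order for an arbitrary interleaving $\mo$ and an arbitrary optimistic/pessimistic pattern is the delicate part; the induction above is precisely where one must verify that no agent's forced choice is accidentally derailed by another agent's blocking, and it is also where the lone ``second-ranked'' exception in item~(3) gets pinned down.
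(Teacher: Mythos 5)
Your overall architecture is the same as the paper's: force a prescribed run in which each agent $j$ receives a target bundle $\vec{b}_j$, make the bounds tight by letting the availability set $\md_j$ (for optimistic $j$) or per-category witness bundles (for pessimistic $j$) be exactly the bundles ranked below $\vec{b}_j$, verify the run by induction on rounds, and realize condition (3) via top-ranked decoys that themselves form an allocation. The genuine gap is in the one place where the difficulty of this theorem is concentrated: what the decoys are. As stated, your decoys are a full-bundle cyclic shift, i.e.\ agent $j$'s top bundle is $\vec{b}_{\sigma(j)}$ for a cyclic permutation $\sigma$, and this destroys the very run you are inducting on. Concretely, take $n=2$, the serial dictatorship $(1,1)\rhd\cdots\rhd(1,p)\rhd(2,1)\rhd\cdots\rhd(2,p)$, both agents optimistic, targets $\vec{b}_1=(1,\ldots,1)$ and $\vec{b}_2=(2,\ldots,2)$. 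The nontrivial shift swaps the two targets, so agent $1$'s top bundle is $(2,\ldots,2)$; being optimistic and first, she takes item $2$ in every category, after which agent $2$ receives $(1,\ldots,1)$ --- her own top-ranked bundle, rank $1$ --- whereas condition (1) requires her rank to be $n^p+1-\prod_{l}k_{2,\mo_2(l)}=n^p$. The failure is structural: a decoy that differs from $\vec{b}_j$ in every category pulls agent $j$ toward a wrong item whenever any of its components is still available at one of her early turns, and nothing in your construction guarantees it is blocked in time.

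The resolution (the paper's key idea) is to make each decoy differ from $\vec{b}_j$ in exactly \emph{one} category, namely the globally first category $i_1$ of $\mo$, with the differing item being the item of $j$'s predecessor in the order $L_{i_1}$ in which agents visit $i_1$: the decoy is $([Pred_{i_1}(j)]_{i_1},[j]_{-i_1})$. This gives simultaneously (i) harmlessness --- while available, the decoy agrees with $\vec{b}_j$ in every category other than $i_1$, so chasing it still produces the prescribed picks; (ii) timely blocking --- by the induction hypothesis the predecessor takes precisely that item before agent $j$'s own turn in $i_1$, so the decoy can never be followed there; and (iii) condition (3) --- the decoys are a cyclic shift \emph{inside the single category} $i_1$, hence still a valid allocation in which every agent except the very first agent $j_1$ gets her top bundle. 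Even then more care is needed, none of which appears in your sketch: an optimistic agent with $K_j>1$ needs a \emph{second} decoy $([Pred_{\mo_j(K_j)}(j)]_{\mo_j(K_j)},[j]_{-\mo_j(K_j)})$ ranked second, because the bundles between the decoys and $(j,\ldots,j)$ are arbitrary and only a decoy that stays available through her turns before round $K_j$ keeps her top \emph{available} bundle correct; and the first agent $j_1$, who has no predecessor, needs special treatment --- in particular, when $j_1$ is pessimistic her decoy coincides with one of her witness bundles, and the paper swaps that witness out for $(L_{i_1}(n),\ldots,L_{i_1}(n))$ at the very bottom so that both the pessimistic comparisons and the count $\sum_l(k_{j_1,\mo_{j_1}(l)}-1)$ survive. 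You correctly identify this consistency problem as ``the delicate part,'' but the proposal leaves it unresolved, and the one concrete instantiation it offers resolves it incorrectly.
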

The proof is quite involved and can be found in the supplementary material.

\begin{ex}\rm  The profile in Example~\ref{ex:sa} is an example of the profile guaranteed by Theorem~\ref{thm:matching}: agent $1$ (optimistic) gets her bottom bundle ($K_1=2$ and $k_{1,2}=1$), agent $2$ (optimistic) gets her bottom bundle ($K_2=2$ and $k_{2,1}=1$), and agent $3$ (pessimistic) gets her third bundle ($k_{3,1}=k_{3,2}=2$). Moreover, there exists an allocation where agent $2$ and agent $3$ get their top bundles and agent $1$ gets her second bundle.
$\hfill\Box$
\end{ex}

Theorem~\ref{thm:matching} can be used to compare various CSAMs with optimistic and pessimistic agents w.r.t.~worst-case utilitarian rank and worst-case egalitarian rank. 
\begin{dfn}\label{dfn:worstcase} Given any CSAM $f_\mo$ and any $n$, the {\em worst-case utilitarian rank} is $\max_{P_n}\sum_{R_j\in {P_n}}\rank(R_j,f_\mo^j({P_n}))$, and the {\em worst-case egalitarian rank} is $\max_{P_n}\max_{R_j\in {P_n}}\rank(R_j,f_\mo^j({P_n}))$, where ${P_n}$ is a profile of $n$ agents.
\end{dfn}
In words, the worst-case utilitarian rank is the worst (largest) total rank of the bundles (w.r.t.~respective agent's preferences) allocated by $f_\mo$. The worst-case egalitarian rank is the worst (largest) rank of the least-satisfied agent, which is also a well-accepted measure of fairness. The worst case is taken over all profiles of $n$ agents. 

\begin{prop}\label{prop:optsd} Among all CSAMs, serial dictatorships with all-optimistic agents have the best (smallest) worst-case utilitarian rank and the worst (largest) worst-case egalitarian rank.
\end{prop}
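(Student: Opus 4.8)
The plan is to reduce both claims to the exact worst-case formulas supplied by Proposition~\ref{prop:worstcase} together with Theorem~\ref{thm:matching}, and then optimize these formulas over all CSAMs. First I would record the two structural facts that drive everything. For a serial dictatorship with order $j_1\rhd\cdots\rhd j_n$, the agent in position $l$ has $K_{j_l}=1$ and $k_{j_l,i}=n-l+1$ for every category $i$ (as in Example~\ref{ex:o}); so if she is optimistic, Theorem~\ref{thm:matching} says her worst-case rank is exactly $n^p+1-(n-l+1)^p$. Summing over $l=1,\ldots,n$ gives worst-case utilitarian rank $n(n^p+1)-\sum_{m=1}^n m^p=n^{p+1}+n-\sum_{m=1}^n m^p$, and worst-case egalitarian rank $n^p$ (attained at $l=n$, where $k=1$ in every category). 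The second structural fact is that for every CSAM and every category $i$, the multiset $\{k_{j,i}:j\le n\}$ equals $\{1,2,\ldots,n\}$, since the $n$ agents pick from $D_i$ in some total order and the $t$-th of them sees exactly $n-t+1$ remaining items.

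The egalitarian claim is then immediate: every bundle has rank at most $n^p=|\md|$, so by Definition~\ref{dfn:worstcase} no CSAM can have worst-case egalitarian rank exceeding $n^p$, while the serial dictatorship with all-optimistic agents already attains $n^p$. Hence it is (weakly) worst, as claimed.

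For the utilitarian claim, Theorem~\ref{thm:matching} lets me write the worst-case utilitarian rank of any $f_\mo$ under any fixed assignment of types as $n(n^p+1)-\sum_{j=1}^n S_j$, where $S_j=\prod_{l=K_j}^p k_{j,\mo_j(l)}$ if agent $j$ is optimistic and $S_j=1+\sum_{l=1}^p(k_{j,\mo_j(l)}-1)$ if she is pessimistic, so that minimizing worst-case utilitarian rank is equivalent to maximizing $\sum_j S_j$. I would bound $S_j\le\prod_{i=1}^p k_{j,i}$ in both cases: for an optimistic agent this only reinserts omitted factors, each $\ge 1$; for a pessimistic agent it is the elementary inequality $\prod_i x_i\ge 1+\sum_i(x_i-1)$ for reals $x_i\ge 1$ (a one-line induction, with equality iff at most one factor exceeds $1$). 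Thus $\sum_j S_j\le\sum_j\prod_{i=1}^p k_{j,i}$, and it remains to show $\sum_j\prod_{i=1}^p k_{j,i}\le\sum_{m=1}^n m^p$. Viewing the $k_{j,i}$ as an $n\times p$ array whose every column is a permutation of $\{1,\ldots,n\}$, this is exactly the claim that among all ways to permute $p$ sequences independently, the sum of row-products is maximized by the identically sorted arrangement, whose rows are $(1,\ldots,1),\ldots,(n,\ldots,n)$ with product-sum $\sum_m m^p$. Chaining the three bounds yields worst-case utilitarian rank $\ge n^{p+1}+n-\sum_m m^p$ for every CSAM, with equality for the serial dictatorship with all-optimistic agents, where $K_j=1$, all agents are optimistic, and the columns are already identically sorted, making both inequalities tight.

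I expect the generalized rearrangement step to be the main obstacle. The naive two-row, single-column exchange does not obviously increase $\sum_j\prod_i k_{j,i}$, because the sign of the gain is governed by the product over the remaining columns; so a clean argument needs either the supermodular-rearrangement lemma (comonotone couplings maximize the expectation of $\prod_i x_i$) or a careful iterated application of the one-dimensional rearrangement inequality showing that any inversion between two columns can be removed without loss until all columns are identically sorted.
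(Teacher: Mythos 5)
Your proposal is correct and follows essentially the same route as the paper's proof: per-agent tightness from Proposition~\ref{prop:worstcase} plus Theorem~\ref{thm:matching}, the elementary inequality $1+\sum_{l}(k_{j,\mo_j(l)}-1)\le\prod_{l}k_{j,\mo_j(l)}$, the observation that the $k_{j,i}$ in each category form the multiset $\{1,\ldots,n\}$, and the trivial $n^p$ cap for the egalitarian claim. The one step you flag as the main obstacle, namely $\sum_j\prod_{i=1}^p k_{j,i}\le\sum_{m=1}^n m^p$ when every column is a permutation of $\{1,\ldots,n\}$, is precisely the step the paper asserts without proof, so your supermodular (meet/join) rearrangement sketch fills a gap the paper leaves implicit rather than constituting a different approach.
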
 
\begin{proof} The worst-case egalitarian rank of any serial dictatorship is $n^p$ when all agents have the same preferences. To prove the optimality of worst-case utilitarian rank, given $f_\mo$, we consider the multiset composed of the numbers of items in the designated category that the active agent can choose in each step. That is, we consider the multiset $\text{RI}=\{k_{j,l}:\forall j\leq n, l\leq p\}$. Since in each step in the execution of $f_\mo$, only one item is allocated, RI is composed of $p$ copies of $\{1,\ldots,n\}$. Since for each agent $j$, $(1+\sum_{l=1}^p(k_{j,O_j(l)}-1))\leq \prod_{l=1}^pk_{j,O_j(l)}$ (we note that in the right hand side, $l$ starts with $1$ but not $K_j$), the best worst-case utilitarian rank is at least $n(n^{p}+1)-\sum_{j=1}^n\prod_{l=1}^pk_{j,O_j(l)}\geq n(n^{p}+1)-\sum_{j=1}^nj^p$. It is not hard to verify that this lower bound is achieved by any serial dictatorship with all-optimistic agents.
\end{proof}

\begin{prop}\label{prop:csmopt} Any CSAM with all-optimistic agents has the worst (largest) worst-case egalitarian rank, which is $n^p$.
\end{prop}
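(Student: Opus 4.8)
The upper bound is immediate: no bundle has rank exceeding $n^p$, so the worst-case egalitarian rank of any mechanism is at most $n^p$. It remains to exhibit, for an arbitrary CSAM $f_\mo$ with all-optimistic agents, a profile in which some agent receives a bundle of rank $n^p$. The plan is to combine Theorem~\ref{thm:matching} with a purely combinatorial fact about $\mo$. By part~(1) of Theorem~\ref{thm:matching} there is a profile $P$ in which every (optimistic) agent $j$ receives a bundle of rank exactly $n^p+1-\prod_{l=K_j}^p k_{j,\mo_j(l)}$. Since every factor $k_{j,i}\geq 1$, each such product is at least $1$, so the egalitarian rank of $P$ equals $n^p+1-\min_j\prod_{l=K_j}^p k_{j,\mo_j(l)}$. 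Hence it suffices to prove the following claim: \emph{in every CSAM there is an agent $j$ with $\prod_{l=K_j}^p k_{j,\mo_j(l)}=1$}, i.e.\ an agent who is the last to choose from $D_i$ (so that $k_{j,i}=1$) for every category $i$ in her ``protected suffix'' $S_j=\{\mo_j(K_j),\ldots,\mo_j(p)\}$.

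To prove the claim, for each agent $j$ I call $t_j=\mo^{-1}(j,\mo_j(K_j))$ her \emph{anchor round} --- the round in which she makes the first choice of her protected suffix; distinct agents occupy distinct rounds, so the anchor rounds are distinct. I would take $j^\dagger$ to be the agent whose anchor round $t_{j^\dagger}$ is latest, and show $j^\dagger$ is the last chooser in every category of $S_{j^\dagger}$. Suppose not: some $c\in S_{j^\dagger}$ is chosen by another agent $j'$ after $j^\dagger$ chooses it. Let $r_c$ and $r'_c$ be the rounds in which $j^\dagger$ and $j'$ choose from $D_c$. Because $c\in S_{j^\dagger}$ sits at an index $\geq K_{j^\dagger}$ in $\mo_{j^\dagger}$, we have $r_c\geq t_{j^\dagger}$, and by assumption $r'_c>r_c$; by maximality of $t_{j^\dagger}$ (and $j'\neq j^\dagger$) we have $t_{j'}<t_{j^\dagger}$.

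Now $j'$ chooses $c$ at round $r'_c>t_{j^\dagger}>t_{j'}$, so $c$ lies at an index strictly beyond $K_{j'}$ in $\mo_{j'}$ and therefore belongs to $S_{j'}$. The defining property of $S_{j'}$ then forbids \emph{any} agent from choosing from $D_c$ between round $t_{j'}$ and round $r'_c$. But $t_{j'}<t_{j^\dagger}\leq r_c<r'_c$, so $j^\dagger$'s choice of $c$ at round $r_c$ is precisely such a forbidden interruption --- a contradiction. Hence $j^\dagger$ is the last chooser in every category of $S_{j^\dagger}$, its product equals $1$, and the claim follows. Applying the claim to $f_\mo$ gives an agent who, in the profile $P$ above, receives a bundle of rank $n^p$; thus the worst-case egalitarian rank is at least $n^p$, and with the trivial upper bound it equals $n^p$.

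The main obstacle is the combinatorial claim, and in particular identifying the correct agent: small examples show that the agent forced to the bottom is neither always the globally last chooser nor always the earliest finisher, so no naive candidate works. The latest-anchor choice is exactly what makes the interruption argument close, and the delicate point is keeping the bookkeeping between $\mo_j$-indices and rounds consistent (that a round strictly after $t_{j'}$ corresponds to an index strictly beyond $K_{j'}$, placing $c$ inside $S_{j'}$ with its protection guarantee active).
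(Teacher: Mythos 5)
Your proposal is correct, and its skeleton matches the paper's: both reduce the statement, via Theorem~\ref{thm:matching}, to the purely combinatorial claim that some agent $j$ has $k_{j,\mo_j(l)}=1$ for every $l\geq K_j$, and both establish that claim with the same interruption mechanism (an outside choice from a category lying strictly beyond another agent's index $K$, inside her protected window, contradicts the definition of $K$). Where you genuinely differ is the global structure of the combinatorial argument. The paper argues by contradiction: assuming every agent has some suffix category with $k_{j,\mo_j(l)}>1$, it starts from the agent who makes the very last choice of the whole mechanism and iteratively produces agents $j_n,j_{n-1},j_{n-2},\ldots$ whose anchor rounds $(j,\mo_j(K_j))$ occur later and later in $\mo$, contradicting the finiteness of $\mo$. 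You instead invoke the extremal principle: take $j^\dagger$ with the latest anchor round and show in one step that she is the last chooser in every category of her protected suffix. These are two packagings of the same idea (unbounded ascent versus taking a maximum), but your version has two small advantages: it explicitly identifies the witness agent who receives rank $n^p$ (the paper's proof is non-constructive on this point), and it uses the interruption argument once rather than in an iterated construction. Your bookkeeping is also sound at the delicate spot: from $t_{j'}<t_{j^\dagger}\leq r_c<r'_c$ you correctly conclude that $c$ sits strictly beyond index $K_{j'}$ in $\mo_{j'}$, so $j'$'s protection window is active and $j^\dagger$'s choice at round $r_c$ is a forbidden interruption.
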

\begin{proof} By Theorem~\ref{thm:matching}, the proposition is equivalent to the existence of an agent $j$ such that for all $l\geq K_{j}$, $k_{j,\mo_j(l)}=1$. For the sake of contradiction, let us assume the following condition:

{\bf Condition (*)}: for every agent $j$, there exists $l\geq K_{j}$ such that $k_{j,\mo_j(l)}>1$.

Let $\mo(np)=(j_n,i_p)$.  It follows that $k_{j_n,i_p}=1$ because there is only one item left. By condition (*), there exists $i_{p-1}$ with $K_{n}\leq \mo_{j_n}^{-1}(i_{p-1})$ such that $k_{j_n,i_{p-1}}>1$. 
Let $j_{n-1}$  denote the agent who is the last to choose an item from category $i_{p-1}$. We have $j_{n-1}\neq j_n$, because agent $n$ is not the last agent to choose an item from category $i_{p-1}$. By definition, we have $k_{j_{n-1},i_{p-1}}=1$. Moreover, $(j_{n-1},\mo_{j_{n-1}}(K_{j_{n-1}}))\rhd_\mo (j_{n},i_{p-1})\rhd_\mo(j_{n},\mo_{j_{n}}(K_{j_{n}}))$, which simply states that $j_{n-1}$ chooses an item from category $\mo_{j_{n-1}}(K_{j_{n-1}})$ after $j_{n}$ chooses an item from category $i_{p-1}$ (the second half of the inequality is due to the way we choose $i_{p-1}$). This inequality holds because if agent $j_{n-1}$ chooses an item from category $\mo_{j_{n-1}}(K_{j_{n-1}})$ before agent  $j_{n}$ chooses an item from category $i_{p-1}$, then agent $j_n$ ``interrupts'' agent $j_{n-1}$ from choosing an item from category $i_{p-1}$, which contradicts the definition of $K_{j_{n-1}}$. 

By condition~(*), there exists $i_{p-2}$ such that $K_{j_{n-1}}\leq \mo_{j_{n-1}}^{-1}(i_{p-2})$ and $k_{j_{n-1},i_{p-2}}>1$. Similarly, we can define $j_{n-2}$, prove that $j_{n-2}\neq j_{n-1}$ and $(j_{n-2},\mo_{j_{n-2}}(K_{j_{n-2}}))\rhd_\mo (j_{n-1},i_{p-2})\rhd_\mo(j_{n-1},\mo_{j_{n-1}}(K_{j_{n-1}}))$. 

However, this process cannot continue forever, since otherwise we will obtain an infinite sequence in $\mo$: $(j_{n},\mo_{j_{n}}(K_{j_{n}}))\lhd_\mo(j_{n-1},\mo_{j_{n-1}}(K_{j_{n-1}}))\lhd_\mo(j_{n-2},\mo_{j_{n-2}}(K_{j_{n-2}}))\lhd_\mo\cdots$, but $np$ is finite. This leads to a contradiction. 
\end{proof}

\begin{prop}\label{prop:bcsm}  For any even number $p$, the worst-case egalitarian rank of any balanced CSAM (defined in Example~\ref{ex:csm}) with all-pessimistic agents  is $n^p-(n-1)p/2$. These are the CSAMs with the best worst-case egalitarian rank among CSAMs with all-pessimistic agents.
\end{prop}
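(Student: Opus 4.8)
The plan is to reduce the worst-case egalitarian rank to a single combinatorial quantity and then compute it exactly. By Theorem~\ref{thm:matching}(2), for all-pessimistic agents the upper bound of Proposition~\ref{prop:worstcase} is attained in one common profile, so for any CSAM $f_\mo$ the worst-case egalitarian rank equals $\max_{j\le n}\left(n^p-\sum_{l=1}^p(k_{j,\mo_j(l)}-1)\right)=n^p-\min_{j\le n}S_j$, where $S_j=\sum_{i=1}^p(k_{j,i}-1)$, the sum ranging over all $p$ categories (each agent picks once from each, so the order $\mo_j$ does not affect this sum). Thus it suffices to (i) evaluate $S_j$ for every agent under a balanced CSAM, and (ii) bound $\min_j S_j$ for an arbitrary CSAM.

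For step (i), fix the agent order $\mk=[j_1\rhd\cdots\rhd j_n]$ defining the balanced CSAM, and consider the agent $j_r$ at position $r$. Since category $i$ is only touched during phase $i$, all $n$ items are available at the start of that phase. In the odd phases the agents pick in the order $\mk$, so $j_r$ finds $k_{j_r,i}=n-r+1$ items available; in the even phases the order is reversed, so $j_r$ is at position $n-r+1$ and $k_{j_r,i}=r$. As $p$ is even there are $p/2$ odd and $p/2$ even categories, giving $S_{j_r}=\tfrac{p}{2}(n-r)+\tfrac{p}{2}(r-1)=\tfrac{p}{2}(n-1)$. The crucial point is that this is independent of $r$: every agent has the same $S_j=(n-1)p/2$, so $\min_j S_j=(n-1)p/2$ and the worst-case egalitarian rank of the balanced CSAM is exactly $n^p-(n-1)p/2$, establishing the first claim.

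For step (ii), the optimality argument rests on the counting identity used in the proof of Proposition~\ref{prop:optsd}: across the $np$ steps of any CSAM the multiset $\{k_{j,i}:j\le n,\ i\le p\}$ consists of $p$ copies of $\{1,\ldots,n\}$, because within each category the $n$ agents exhaust the values $n,n-1,\ldots,1$ exactly once. Summing over agents therefore gives $\sum_{j}S_j=p\sum_{m=1}^n(m-1)=\tfrac{pn(n-1)}{2}$, whose average over the $n$ agents is $(n-1)p/2$. Hence $\min_j S_j\le (n-1)p/2$ for every CSAM, so every all-pessimistic CSAM has worst-case egalitarian rank $n^p-\min_j S_j\ge n^p-(n-1)p/2$. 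The balanced CSAM attains this bound with equality (its $S_j$ are all equal to the average), so it is optimal, proving the second claim.

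The computations are short, and the genuine work has already been absorbed into the machinery we invoke. The only delicate bookkeeping is tracking $k_{j_r,i}$ phase by phase so that the forward and reverse orderings contribute $n-r$ and $r-1$ to $S_{j_r}$ and cancel the dependence on $r$. I expect the main obstacle to be the clean invocation of Theorem~\ref{thm:matching}, whose (separately established, more involved) guarantee is what converts the per-agent upper bounds of Proposition~\ref{prop:worstcase} into an exact worst-case egalitarian value realized simultaneously for all agents in a single profile; the remaining averaging step is then immediate, since the minimum of a multiset whose entries are all equal to its average coincides with that average.
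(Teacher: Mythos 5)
Your proposal is correct and follows essentially the same route as the paper's proof: both convert the problem via Proposition~\ref{prop:worstcase} and Theorem~\ref{thm:matching}(2) into the quantity $n^p-\min_j\sum_{l}(k_{j,l}-1)$, both use the odd/even-phase pairing $k_{j,2l-1}+k_{j,2l}=n+1$ to show every agent's sum equals $(n-1)p/2$ under a balanced CSAM, and both establish optimality by the same counting identity $\sum_{j,l}k_{j,l}=(n+1)np/2$ plus averaging. Your write-up is somewhat more explicit than the paper's (in particular about why the order $\mo_j$ is irrelevant to the sum and how the matching profile yields exactness), but there is no substantive difference in approach.
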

\begin{proof} For any balanced CSAM, it is not hard to see that for any agent $j$, $\mo_j=1\rhd\cdots\rhd p$. For any $l<p/2$ and any $j\leq n$, we have $k_{j,2l-1}+k_{j,2l}=n+1$. Since all agents are pessimistic, by part 2 of Theorem~\ref{thm:matching}, their worst-case ranks are all equal to $n^p-(n-1)p/2$. The optimality of balanced CSAMs comes from the fact that for any categorial sequential mechanisms $\sum_{j,l}k_{j,l}=(n+1)np/2$. Therefore, there must exist an agent $j^*$ with $\sum_{l=1}^pk_{j,l}\leq (n+1)p/2$.\end{proof}

A natural question after Proposition~\ref{prop:bcsm} is: do the balanced CSAMs with all-pessimistic agents have optimal worst-case egalitarian rank, among all CSAMs {\em for any combination of optimistic and pessimistic agents}? The answer is negative. 

\begin{prop}\label{prop:wer} For any even number $p$ with $2^p>1+(n-1)p/2$, there exists a CSAM with both optimistic and pessimistic agents, whose worst-case egalitarian rank is strictly better (smaller) than $n^p-(n-1)p/2$.
\end{prop}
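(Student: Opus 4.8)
The plan is to read off the worst-case egalitarian rank from the two results already proved. Combining Proposition~\ref{prop:worstcase} (every agent's rank is at most her bound, in every profile) with Theorem~\ref{thm:matching} (a single profile attains all bounds at once), one gets that for any fixed $f_\mo$ and any fixed assignment of optimistic/pessimistic types the worst-case egalitarian rank equals $\max_{j\le n}B_j$, where $B_j=n^p+1-\prod_{l=K_j}^p k_{j,\mo_j(l)}$ if $j$ is optimistic and $B_j=n^p-\sum_{l=1}^p(k_{j,\mo_j(l)}-1)$ if $j$ is pessimistic. Hence it suffices to exhibit one order $\mo$ together with a type assignment for which every $B_j$ is strictly below $n^p-(n-1)p/2$.

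The design target is to ``spend'' the hypothesis on a single optimistic agent. I would single out one agent $j^*$, declare her optimistic, and arrange $\mo$ so that $K_{j^*}=1$ and $k_{j^*,l}=2$ for every category $l$. Then $B_{j^*}=n^p+1-\prod_{l=1}^p 2=n^p+1-2^p$, and the inequality $n^p+1-2^p<n^p-(n-1)p/2$ rearranges to exactly the stated hypothesis $2^p>1+(n-1)p/2$. The remaining $n-1$ agents I would declare pessimistic; for such a $j$ the bound $B_j$ lies below the threshold if and only if $\sum_{l=1}^p k_{j,l}>(n+1)p/2$. So the whole problem reduces to building an order in which the optimistic agent occupies the prescribed ``block'' while every pessimistic agent's total availability strictly exceeds $(n+1)p/2$.

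For the explicit order I would let $j^*$ take a consecutive block of $p$ steps, one item per category, precisely at the instant when every category has exactly two items remaining: consecutiveness forces $K_{j^*}=1$ and the timing forces $k_{j^*,l}=2$. Before the block, the $n-1$ pessimistic agents remove the first $n-2$ items of each category in a balanced round-robin; after the block, the one surviving item of each category is taken. This is a legal CSAM since each category's multiset $\{k_{j,l}:j\le n\}$ is still $\{1,\dots,n\}$. The budget check is favorable: the total $\sum_{j,l}k_{j,l}=(n+1)np/2$ minus the optimistic agent's $2p$ leaves an average of $\tfrac{p(n^2+n-4)}{2(n-1)}$ per pessimistic agent, which exceeds the threshold $(n+1)p/2$ by $\tfrac{p(n-3)}{2(n-1)}>0$ for $n\ge 4$, giving positive slack to drive every pessimistic total strictly over $(n+1)p/2$.

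I expect the main obstacle to be exactly this last realizability step: converting the averaged budget surplus into an explicit balanced round-robin that certifies $\sum_{l}k_{j,l}>(n+1)p/2$ for every pessimistic agent \emph{simultaneously}, not merely on average. The slack $\tfrac{p(n-3)}{2(n-1)}$ shrinks as $n$ decreases, so the assignment of the pre-block picks (and of the $p$ post-block ``last'' picks) to the pessimistic agents must be laid out with care, and verifying that no single pessimistic agent is starved is where the combinatorial work — and the tightness of the stated hypothesis — concentrates.
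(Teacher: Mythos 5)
Your first two steps are correct and match the paper's intent: by Proposition~\ref{prop:worstcase} and Theorem~\ref{thm:matching}, the worst-case egalitarian rank of a fixed CSAM with fixed agent types equals $\max_{j\le n}B_j$, and the design you aim for---one optimistic agent taking a consecutive block of $p$ picks at the moment every category has exactly two items left, everyone else pessimistic---is exactly the construction the paper's proof intends. In fact your timing \emph{corrects} the paper: the paper's own order inserts the block right before the last phase of the balanced CSAM, so the interrupting agent faces $k_{n,i}=1$ in categories $1,\ldots,p-1$ and $k_{n,p}=n$, her product is $n$ rather than $2^p$, and her bound $n^p+1-n$ exceeds $n^p-(n-1)p/2$ whenever $p>2$ (the paper also swaps ``optimistic'' and ``pessimistic'' in its text). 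So as a comparison of approaches: same idea, but your block placement is the one the claimed bound $n^p+1-2^p$ actually requires.

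The step you defer, however, is a genuine and in general unfixable gap, not bookkeeping. Since $(n+1)p/2$ is an integer, each pessimistic agent needs $\sum_l k_{j,l}\ge (n+1)p/2+1$, i.e.\ a total of $p(n^2-1)/2+(n-1)$; after the optimistic agent takes $2p$, only $p(n^2+n-4)/2$ remains, so the surplus is $p(n-3)/2-(n-1)$. This is negative for $n=2$, for $n=3$ (every $p$), and for $(n,p)=(4,4)$, all admitted by the hypothesis (e.g.\ $(4,4)$: $2^4=16>7$). For $(4,4)$ the proposition can still be saved, but only by abandoning your target: give the optimistic agent availabilities $(2,2,2,1)$ (product $8>7$, sum $7$), which frees exactly enough for the pessimistic agents to realize $(4,4,1,2)$, $(3,1,4,3)$, $(1,3,3,4)$, each summing to $11>10$; the same pre-block/block/post-block layout realizes this with $K=1$. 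Worse, for $(n,p)=(3,4)$ (hypothesis $16>5$) the statement itself is false, so no completion exists: all-pessimistic fails since $3\cdot 9>24$; two-pessimistic fails since the optimistic agent is left with sum at most $6$, hence product at most $4<6$; all-optimistic fails by Proposition~\ref{prop:csmopt}; and with one pessimistic agent $P$ and two optimistic agents, comparing the two anchor rounds $(j,\mo_j(K_j))$ shows that in any category counted in the later-anchored agent's product, the other optimistic agent must pick \emph{before} her and $P$ must pick \emph{after} her---so every counted factor equals $2$, at least three such categories are needed to reach product $6$, and then $P$ picks last in at least three categories, giving $\sum_l k_{P,l}\le 6<9$, a contradiction. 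So the ``balancing'' you postpone sometimes forces a different construction and sometimes cannot be done at all; your proof attempt (like the paper's own) is incomplete, and the proposition as stated needs a stronger hypothesis than $2^p>1+(n-1)p/2$.
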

\begin{proof} We prove the proposition by explicitly constructing such a mechanism. The idea is, agents $\{1,\ldots,n-1\}$ choose the items as in a balanced CSAM for $n-1$ agents, then we let agent $n$ ``interrupt'' them and choose all items in consecutive $p$ rounds right before their last iteration, i.e.~the last $(n-1)$ round. Then, we let agents $1$ through $n-1$ be optimistic and let agent $n$ be pessimistic. For example, when $n=3$ and $p=4$, the order is $(1,1)\rhd(2,1)\rhd(2,2)\rhd(1,2)\rhd(1,3)\rhd(2,3)\rhd (3,1)\rhd(3,2)\rhd(3,3)\rhd(3,4)\rhd(2,4)\rhd(1,4)$. Agent $1$ and agent $2$ are optimistic and agent $3$ is pessimistic.

By part 2 of Theorem~\ref{thm:matching}, for any agent $j\leq n-1$, the worst-case rank is $n^p+1-(1+np/2)$. By part 1 of Theorem~\ref{thm:matching}, the worst-case rank for agent $n$ is $n^p+1-2^p$. This proves the proposition.
\end{proof}

\section{Rank Efficiency of  CSAMs for Strategic Agents}
When all agents are strategic and have complete and perfect information, the CSAM naturally corresponds to an extensive-form game, for which we will focus on the subgame-perfect Nash Equilibrium (SPNE), which is unique because agents' preferences are linear orders.
\begin{ex}\label{ex:strategic} \rm Let $n=p=2$. Two agents' preferences are $R_1 = [22\succ 12\succ 11\succ 21]$ and $R_2 = [12\succ 11\succ 22\succ 21]$. Let $\mo = (1,1)\rhd (2,2)\rhd (2,1)\rhd (1,2)$. Suppose both agents are strategic and have complete and perfect information. The game tree and backward induction for SPNE are illustrated in Figure~\ref{fig:tree} (a). When $t=3,4$, the active agent only has one choice. When $t=2$, on the left tree agent $2$ chooses item $2$ from $D_2$ because she prefers $22$ to $21$; on the right tree agent $2$ chooses item $2$ from $D_2$ because she prefers $12$ to $11$. When $t=1$, agent $1$ chooses item $1$ from $D_1$ because she prefers $11$ to $21$.\hfill$\Box$
\end{ex}

\begin{figure}[htp]
\centering
\begin{tabular}{cc}
\includegraphics[trim=0 10cm 12cm 0, clip=true, width=.55\textwidth]{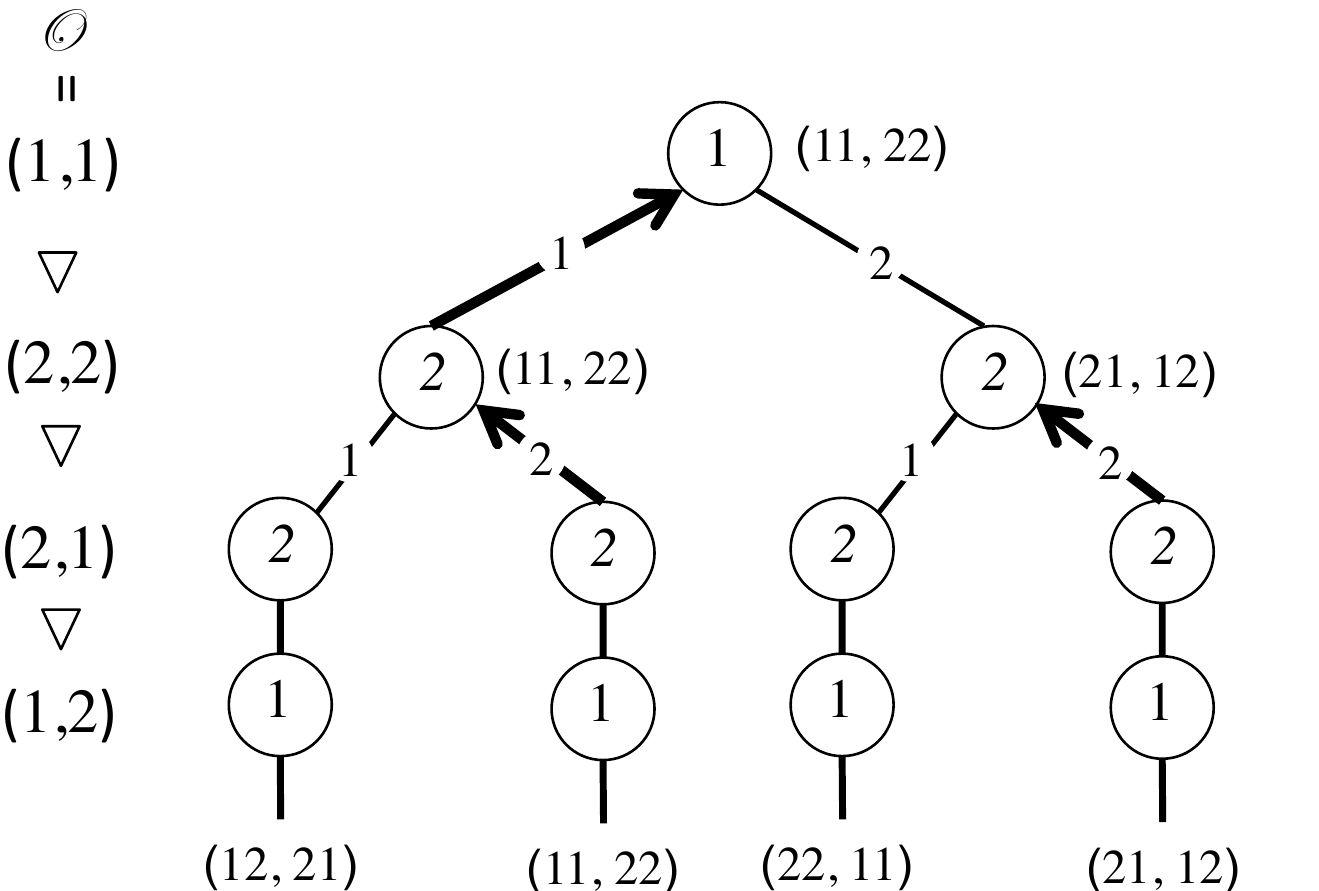}&
\includegraphics[trim=17cm 12cm 0cm 0, clip=true,width=.4\textwidth]{tree.pdf}\\
(a) & (b)
\end{tabular}
\caption{\small (a) Extensitve-form game in Example~\ref{ex:strategic}. (b) Illustration of proof for Proposition~\ref{prop:strategic}.\label{fig:tree}}
\end{figure}
\begin{prop}\label{prop:strategic} Suppose all agents are strategic and have complete and perfect information. 
For any CSAM $f_\mo$, any $j\leq n$, and any profile, the rank of the bundle allocated to agent $j$ in the SPNE is at most $n^p+1-\prod_{i\leq p}k_{j,i}$.
\end{prop}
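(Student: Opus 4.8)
The plan is to reduce the claim to a \emph{maximin} (worst-case guarantee) statement and then prove that guarantee by a backward induction on agent $j$'s own $p$ choices. First I would observe that, since $j$'s SPNE strategy is a best response to the other agents' equilibrium strategies from the root, her SPNE payoff is at least the payoff of \emph{any} fixed strategy of hers against those same strategies, and in particular at least her maximin value --- the best bundle she can guarantee against worst-case play of everyone else. Because a more preferred bundle has a smaller rank, it therefore suffices to exhibit a strategy for $j$ guaranteeing a bundle of rank at most $n^p+1-\prod_{i\le p}k_{j,i}$ no matter how the others play. Throughout I would reuse the structural fact already exploited in the proof of Proposition~\ref{prop:worstcase}: when $j$ is about to choose from category $i$, exactly $k_{j,i}$ items of $D_i$ remain available, independently of the history.

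Assume w.l.o.g.\ that $\mo_j=1\rhd 2\rhd\cdots\rhd p$, so $j$ picks from $D_1,\ldots,D_p$ in this order with the other agents moving in between. For a prefix $\vec a=(a_1,\ldots,a_{l-1})$ write $C(\vec a)=\{\vec d\in\md:\ d_{l'}=a_{l'}\text{ for all }l'<l\}$ for the corresponding cylinder. The heart of the argument is the following strengthened claim, proved by downward induction on $l$ from $p$ to $1$: from any state immediately before $j$'s $l$-th move in which $j$ has already secured the prefix $\vec a$, agent $j$ can guarantee an outcome bundle $\vec b$ (with prefix $\vec a$) such that the number of bundles in $C(\vec a)$ that are weakly $R_j$-worse than $\vec b$ is at least $\prod_{l'=l}^{p}k_{j,l'}$. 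Applying this with $l=1$ (empty prefix, $C(\vec a)=\md$) gives at least $\prod_{i\le p}k_{j,i}$ bundles weakly worse than $\vec b$, i.e.\ $\rank(R_j,\vec b)\le n^p+1-\prod_{i\le p}k_{j,i}$, which is exactly the desired bound.

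For the base case $l=p$ the reachable bundles are the $k_{j,p}$ bundles $(\vec a,s)$ with $s$ ranging over the available items of $D_p$; agent $j$ takes the $R_j$-best of them, and those same $k_{j,p}$ bundles are then all weakly worse than her choice and all lie in $C(\vec a)$. For the inductive step I would fix the worst-case continuation after each choice $a_l$ of the $k_{j,l}$ available items of $D_l$: the induction hypothesis applies verbatim (because exactly $k_{j,l+1}$ items remain in $D_{l+1}$ whatever the intervening moves were) and yields, for each $a_l$, a guaranteed outcome $\vec b^{(a_l)}$ with at least $\prod_{l'>l}k_{j,l'}$ weakly-worse bundles inside the sub-cylinder $C(\vec a,a_l)$. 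Agent $j$ then selects the $a_l$ whose $\vec b^{(a_l)}$ is $R_j$-best, obtaining $\vec b$; since $\vec b\succeq_{R_j}\vec b^{(a_l)}$ for every $a_l$, each of these sub-collections consists of bundles weakly worse than $\vec b$, the sub-cylinders $C(\vec a,a_l)$ are pairwise disjoint (distinct $l$-th coordinate) and contained in $C(\vec a)$, and summing over the $k_{j,l}$ choices gives at least $k_{j,l}\cdot\prod_{l'>l}k_{j,l'}=\prod_{l'=l}^{p}k_{j,l'}$ weakly-worse bundles in $C(\vec a)$, completing the step.

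The step that needs the most care --- and the reason a naive induction on the value alone fails --- is precisely this multiplicative accumulation: comparing guaranteed values across the $k_{j,l}$ first-stage choices only shows that $j$ does at least as well as the \emph{best} sub-outcome (a single factor), whereas the bound requires the full \emph{product}. Phrasing the induction hypothesis as a lower bound on the \emph{count} of dominated bundles inside the current cylinder, rather than as a bound on the rank, is what lets the disjoint sub-cylinders add up to the product; checking that this hypothesis genuinely applies at every adversary-reached state, via the history-independence of $k_{j,i}$, is the other point I would verify explicitly.
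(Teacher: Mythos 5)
Your proof is correct, and it reaches the bound by a genuinely different decomposition than the paper's. The paper runs a single backward induction over all $np$ nodes of the SPNE game tree: it defines counters $A_{j,t}$ for \emph{every} agent at every step (with $A_{j,np}=1$, $A_{j,t-1}=k_{j,i}A_{j,t}$ at agent $j$'s own nodes and $A_{j',t-1}=A_{j',t}$ at other agents' nodes) and shows that the backward-induction bundle of each agent $j$ at a node dominates at least $A_{j,t}-1$ of $j$'s bundles among the leaves of that subtree; other agents' moves are handled by observing that equilibrium play simply follows the chosen subtree, so the count passes through unchanged. You instead first reduce the SPNE outcome to a maximin guarantee via the best-response property at the root, and then induct only over agent $j$'s own $p$ moves, absorbing everyone else's moves into adversarial quantification and counting dominated bundles inside disjoint cylinders $C(\vec a, a_l)$. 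The combinatorial heart is the same in both arguments --- the $k_{j,l}$ available choices index pairwise disjoint families, each contributing the product of the later factors (your count of $k_{j,l}\prod_{l'>l}k_{j,l'}$ weakly worse bundles is exactly the paper's $k_{j,i}(A_{j,t}-1)+k_{j,i}-1$ strictly worse bundles, restated in weak form), and both rest on the history-independence of $k_{j,i}$, which you rightly flag as the point needing explicit verification. What the two routes buy differs: your maximin reduction proves something slightly stronger (the rank bound holds whenever agent $j$ best-responds to \emph{arbitrary} opponent strategies, equilibrium or not) and keeps the induction short, since other agents' decision nodes never have to be analyzed; the paper's tree-wide induction avoids the game-theoretic reduction entirely and bounds all agents' ranks simultaneously within one self-contained backward-induction argument.
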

\begin{proof} For any $j\leq n$ and $t\leq np$, we define $A_{j,t}$ recursively as follows. For all $j$, $A_{j,np} = 1$. For any $2\leq t\leq np$, suppose $\mo(t) = (j,i)$. We let $A_{j,t-1}=k_{j,i}\times A_{j,t}$ and for any $j'\neq j$, we let $A_{j',t-1}=A_{j',t}$.

We next prove by induction that for any node $\mn$ (corresponding to step $t$) in the backward induction tree of the extensive-form game of $f_\mo$, any $j\leq n$, and any $t\leq np$, there are at least $A_{j,t}-1$  bundles at the leaves of the subtree rooted at $\mn$ that are ranked below the bundle agent $j$ chooses at the node w.r.t.~agent $j$'s preferences.

The case for all nodes with $t = np$ naturally holds. Supposing the claim holds for all nodes with  $2\leq t'\leq np$, we next show that it also holds for all nodes with $t'-1$. For any node $\mn$ with $t'-1$, let agent $j$ denote the agent who will chose an item from category $i$. By definition, agent $j$ has $k_{j,i}$ choices, which corresponds to the $k_{j,i}$ subtrees of $\mn$. Let $B= \{b_1\ldots b_{k_{j,i}}\}$ denote these bundles from subtree $1$ through $k_{j,i}$ respectively. See Figure~\ref{fig:tree} (b) for illustration. By the induction hypothesis, for each subtree $l$, there are at least $A_{j,t'} -1$ bundles ranked below $b_l$. Because agent $j$ will choose her most preferred bundle in $B$, w.l.o.g.~let it be $b_1$, there are at least $k_{j,i}(A_{j,t'} -1)+k_{j,i}-1= A_{j,t'-1}-1$ bundles ranked below $b_1$ among all bundles for $j$ in the leaves of $\mn$. For any other agent $j'$, her allocation is the same as in subtree $1$, which means that there are at least $A_{j',t'}-1 = A_{j',t'-1}-1$ bundles in the leaves of subtree $1$ ranked below the bundle allocated to her at $\mn$. This proves that the claim holds for all $t$. 

The proposition follows after the observation that for any $j\leq n$, $A_{j,1} = \prod_{i\leq p}k_{j,i}$.
\end{proof}

The next proposition shows that when there are two agents, for any $p$ and any CSAM, the bounds in Proposition~\ref{prop:strategic} can be reached in a same profile. The proof is by construction and is relegated to the appendix.
\begin{prop}\label{prop:strategicn=2} When $n=2$, both agents are strategic, and have complete and perfect information. For any CSAM $f_\mo$, there exists a profile $P$ such that for $j= 1,2$, the rank of the bundle allocated to agent $j$ is $n^p+1-\prod_{i\leq p}k_{j,i}$.
\end{prop}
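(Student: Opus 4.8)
The plan is to construct, for a fixed CSAM $f_\mo$ with $n=2$, a single profile $P=(R_1,R_2)$ for which the SPNE outcome forces each agent $j$ down to rank exactly $n^p+1-\prod_{i\leq p}k_{j,i}$, thereby matching the upper bound of Proposition~\ref{prop:strategic}. Since Proposition~\ref{prop:strategic} already gives the upper bound for each agent, it suffices to exhibit one profile where both bounds are attained simultaneously. The natural approach is to reverse-engineer the backward-induction argument from that proposition: the bound $\prod_{i\leq p}k_{j,i}$ counts the leaves of the game subtree, and equality in the inductive step $k_{j,i}(A_{j,t'}-1)+k_{j,i}-1=A_{j,t'-1}-1$ holds only when the agent's chosen (most-preferred) option in each of her rounds is the one whose subtree contains precisely the top $A_{j,t'}$ bundles. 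So I would build $P$ so that this tightness propagates all the way up the tree for both agents at once.

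Concretely, with $n=2$ every category $D_i=\{1,2\}$, so a bundle is a binary string of length $p$ and there are $2^p$ bundles total; note $\prod_i k_{j,i}$ is a product of $1$'s and $2$'s determined purely by $\mo$ (each agent is ``first'' or ``second'' to choose in each category). First I would fix the SPNE target outcome: let $\vec a=f^1(P)$ and $\vec b=f^2(P)$ be the two bundles the agents end up with (they must be complementary, i.e.\ partition each category). I would then design $R_1$ and $R_2$ so that along the unique backward-induction path, whenever agent $j$ has a genuine binary choice (a category where $k_{j,i}=2$), her preferred continuation leads to $\vec a$ (resp.\ $\vec b$), while the rejected subtree's best leaf for her is ranked immediately below her realized bundle. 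The key device is to place $\vec a$ and $\vec b$ near the \emph{bottom} of $R_1$ and $R_2$ respectively, at exactly the positions $n^p+1-\prod_i k_{1,i}$ and $n^p+1-\prod_i k_{2,i}$, and to arrange the remaining bundles so that every subtree the backward induction discards is stocked with bundles that agent $j$ likes even less, forcing her into the losing branch at each of her nodes.

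The main obstacle I anticipate is the simultaneity: the two agents' preferences are coupled because they share the same game tree, and tightening agent $1$'s bound constrains which bundles sit in which subtree, which in turn constrains agent $2$'s rank at every node. The clean way to handle this is a rank-assignment argument. I would assign to each of the $2^p$ leaf bundles a pair of ranks (one in $R_1$, one in $R_2$) by walking the backward-induction tree top-down and using the recursion $A_{j,t-1}=k_{j,i}\,A_{j,t}$ from Proposition~\ref{prop:strategic}: the subtree chosen by the active agent $j$ receives the top $A_{j,t}$ ranks for $j$ among the current block, while the inactive agent's ranks are copied identically into both child subtrees (consistent with $A_{j',t-1}=A_{j',t}$). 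Verifying that this greedy labeling yields two genuine linear orders (no conflicts, every rank used exactly once) and that backward induction on the resulting profile reproduces the intended path is the crux; the tightness of each bound then falls out because the recursion was saturated with equalities at every node.

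Once the profile is constructed, the proof concludes by running backward induction on $f_\mo$ with $P$ and confirming that agent $j$'s realized bundle has rank $n^p+1-\prod_{i\leq p}k_{j,i}$ for $j=1,2$, matching Proposition~\ref{prop:strategic}; uniqueness of the SPNE (guaranteed since preferences are strict linear orders, as noted before Example~\ref{ex:strategic}) ensures the outcome is well-defined. I expect the bookkeeping of the interleaving of the two agents' moves in an arbitrary $\mo$ to be the most delicate part, and I would organize it by induction on the number of rounds $np$, peeling off the last move and appealing to a two-agent version of the subtree labeling, so that each inductive step only has to reconcile one additional binary (or forced) choice with the already-established ranks.
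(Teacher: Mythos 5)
Your high-level goal is the right one (saturate the recursion $A_{j,t-1}=k_{j,i}A_{j,t}$ of Proposition~\ref{prop:strategic} so that both agents' bounds are met in a single profile), and your observation that for $n=2$ the leaves of the game tree are in bijection with $\md$ for each agent is correct. However, the concrete labeling rules you give are inconsistent with tightness, and the step you yourself flag as ``the crux'' is exactly where they break. Write $m_j$ for the number of categories in which agent $j$ picks first, so $\prod_{i\leq p}k_{j,i}=2^{m_j}$; tightness means exactly $2^{m_j}-1$ bundles lie below agent $j$'s realized bundle. Now consider the agent who moves first in $\mo$; she has a genuine binary choice at the root. Your rule that ``every subtree the backward induction discards is stocked with bundles that agent $j$ likes even less'' (equivalently, the block version: the chosen subtree gets the top ranks) places all $2^{p-1}$ bundles of the root's discarded subtree below her realized bundle. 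But whenever the other agent leads in at least one category, $2^{m_j}-1<2^{p-1}$, so her realized rank becomes strictly \emph{better} than the target $n^p+1-\prod_{i\leq p}k_{j,i}$ and tightness fails. The correct requirement is much more delicate: only the \emph{SPNE outcome} of each discarded subtree may sit (immediately) below the realized bundle, while the bulk of the discarded subtree must be ranked \emph{above} it. Similarly, ``the inactive agent's ranks are copied identically into both child subtrees'' is not meaningful: the two subtrees contain disjoint sets of bundles for the inactive agent as well, so each needs its own rank, and deciding how to interleave them is precisely the coupling problem you leave unresolved.

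The paper resolves both issues by induction on $p$, peeling off the \emph{first} round rather than the last move: w.l.o.g.\ agent $1$ opens by choosing from category $p'+1$; agent $2$'s pick there is then forced, so the rest of the game is a CSAM $\mo'$ over $p'$ categories, and the induction hypothesis supplies $R_1'=[T_1'\succ b_1'\succ B_1']$ and $R_2'=[T_2'\succ b_2'\succ B_2']$ with tight SPNE outcomes $b_1',b_2'$. The full rankings are the interleaved products $R_1=[(T_1',1)\succ(T_1',2)\succ(b_1',1)\succ(b_1',2)\succ(B_1',1)\succ(B_1',2)]$ and $R_2=[(T_2',1)\succ(b_2',1)\succ(B_2',1)\succ(T_2',2)\succ(b_2',2)\succ(B_2',2)]$. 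Note the asymmetry your scheme misses: for the \emph{active} agent the two subtrees are interleaved level by level, so the discarded subtree's SPNE outcome $(b_1',2)$ sits just below her realized bundle $(b_1',1)$ while $(T_1',2)$ sits above it; for the \emph{inactive} agent the entire discarded subtree is ranked on top. Because each subtree's restricted game is order-isomorphic to the inductively constructed smaller game, backward induction provably follows the intended path, and the counts come out exactly: $2|B_1'|+1$ bundles below $(b_1',1)$ and $|B_2'|$ below $(b_2',2)$, matching $\prod_{i\leq p}k_{j,i}-1$ for each agent. Without this interleaving idea (or an equivalent device), your top-down greedy assignment cannot be completed into two consistent linear orders that are simultaneously tight, so the proposal as written has a genuine gap.
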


\newcommand{\myheight}{.26\textheight}
{


}

\section{Simulation Results}
In this section, we use computer simulations to evaluate {\em expected} efficiency of categorial sequential mechanisms, when agents' preferences are generated i.i.d.~from a well-known statistical model called the {\em Mallows model}~\cite{Mallows57:Non-null}. In a Mallows model, we are given a dispersion parameter $0<\varphi\leq 1$, and for any ground truth ranking $W$ and any ranking $V$, we have $\Pr(V|W)=\frac{1}{Z}\cdot \varphi^{\kendall(V,W)}$, where $\kendall(V,W)$ is the {\em Kendall-tau distance} between $V$ and $W$, defined to be the number of different pairwise comparisons between alternatives and $Z$ is the normalization factor. In the Mallows model, the dispersion parameter measures the centrality of the generated linear orders. The smaller $\varphi$ is, the more centralized the randomly generated linear orders are (around the ground truth linear order). When $\varphi=1$, the Mallows model degenerates to the uniform distribution for any ground truth linear order $W$.


{\noindent\bf Data generation.} We fix $p=2$, let $n$ range from $2$ to $11$, and let $\varphi$ be $0.1$, $0.5$, and $1$. For each setting, we first randomly generate a linear order $W$ over $\md$, and then use it as the ground truth in the Mallows model to generate $n$ agents' preferences. For each setting we generate $2000$ datasets and use them to approximately compute the expected utilitarian rank and the expected egalitarian rank, defined by replacing $\max_{P_n}$ by $E_{P_n}$ in Definition~\ref{dfn:worstcase}.\footnote{The expected egalitarian rank should be distinguished from the {\em egalitarian expected rank}, which first computes the expected rank for every agent, then chooses the largest (expected) rank.} We evaluate serial dictatorships and balanced CSAMs with two configurations of agents: all-optimistic agents and all-pessimistic agents.  All computations were done on a 1.8 GHz Intel Core i7 laptop with 4GB memory.

{\noindent\bf Results.} The results for $\varphi=0.5$ are shown in Figure~\ref{fig:phi=0.5}, where we also plot 95\% confidence intervals. Results for other $\varphi$'s are similar. 
We observe that in general, serial dictatorships with all-optimistic agents have the best (smallest) expected utilitarian rank, and balanced CSAMs with all-pessimistic agents have the best (smallest) expected egalitarian rank. All these comparisons are statistically significant at the 0.05 level.
These observations are incidentally consistent with the worst-case results obtained in Section~\ref{sec:eff}.

\begin{figure*}[htp]\centering
\begin{tabular}{cc}
\includegraphics[trim=6mm 1mm 16mm 10mm, clip=true, width=.49\textwidth,height=\myheight]{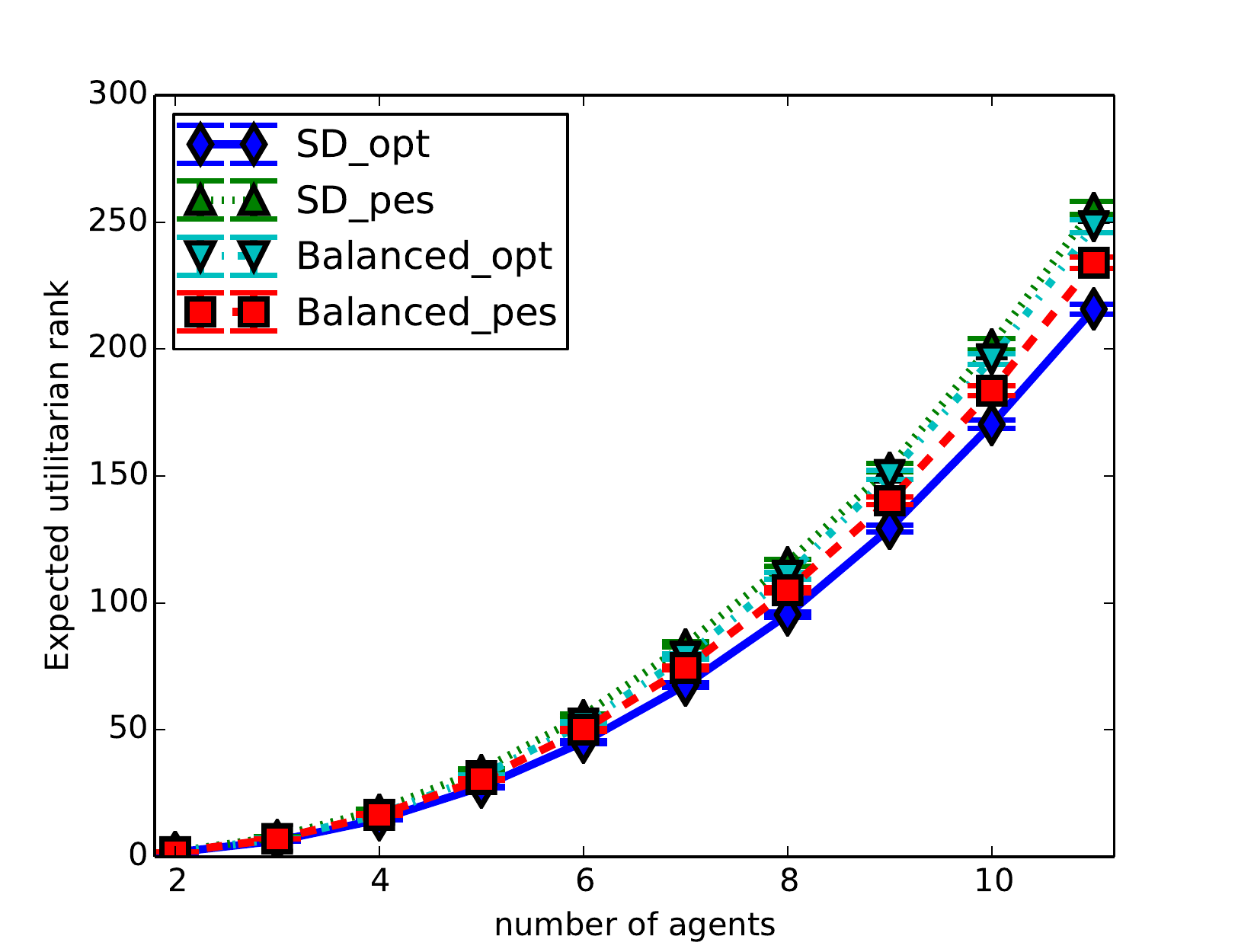}&\includegraphics[trim=7mm 1mm 16mm 10mm,  clip=true,width=.49\textwidth,height=\myheight]{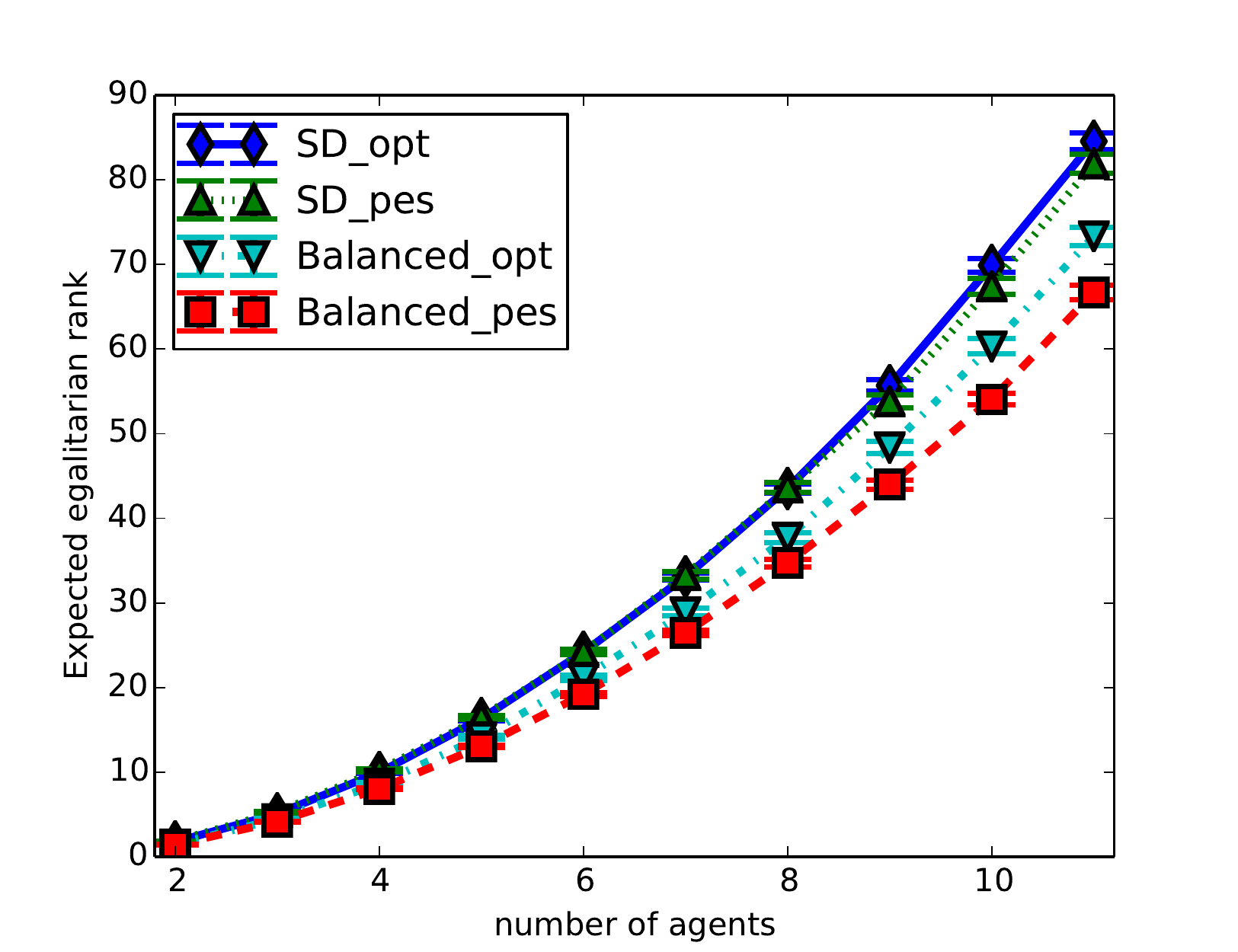}
\\
\small Expected utilitarian rank. &\small  Expected egalitarian rank.
\end{tabular}
\caption{ The data are generated from the Mallows model with $\varphi=0.5$.\label{fig:phi=0.5}}
\end{figure*}

\section{Summary and Future Work}
In this paper we propose CDAPs to model allocation problems for indivisible and categorized items without monetary transfer, when agents have heterogenous and combinatorial preferences. We characterize serial dictatorships for basic CDAPs, propose CSAMs and characterize worst-case rank efficiency for CSAMs with any combination of optimistic and pessimistic agents, which leads to characterizations of utilitarian rank and egalitarian rank of various CSAMs. We provide an upper bound on the rank efficiency for CSAMs with strategic agents and a matching lower bound for the case of two agents. We also performed some preliminary simulations to compare the expected ranks of various CSAMs.

There are many open questions and directions for future research, including proving a matching bound on the rank efficiency for $n>2$ strategic agents, analyzing the outcomes  and rank efficiency for CSAMs for other types of agents, e.g.~minimax-regret agents. We also plan to 
work on theoretical analysis of the expected utilitarian rank and egalitarian rank for randomized allocation mechanisms. For general CDAPs, we are excited to explore generalizations of CP-nets~\cite{Boutilier04:CP}, LP-trees~\cite{Booth10:Learning}, and soft constraints~\cite{Pozza11:Multi-agent} for preference representation. Based on these new languages we can analyze fairness and computational aspects of CSAMs and other mechanisms. Mechanism design for CDAPs with sharable, non-sharable, and divisible items is also an important and promising topic for future research.

%
{\small

}
\newpage
\section*{Appendix: Full Proofs}
\subsection{Proof of Theorem~\ref{thm:matching}}

\begin{proof} Given $\mo$ and the information on whether each agent $j$ is optimistic or pessimistic, we will construct a profile $P$ such that in $f_\mo(P)$, for all $j\leq n$, agent $j$ obtains $(j,\ldots,j)$. 

We prove the theorem in the following three steps: 
\begin{itemize}
\item {\bf Step 1: define bottom bundles}. We specify a set of bundles that are ranked in the bottom positions for each agent $j$, and require $(j,\ldots,j)$ to be ranked on the top of them. 
\item {\bf Step 2: define top bundles}. We specify top-$1$ and sometimes also top-$2$ bundles for each agent. 
\item {\bf Step 3: extend to full profile}. We take a profile that extends the partial orders constructed in the first two steps, and then show that it satisfies all three properties in the theorem. 
\end{itemize}The construction is summarized in Table~\ref{tab:optapp} (for optimistic agents) and Table~\ref{tab:pesapp} (for pessimistic agents).
   \renewcommand{\arraystretch}{1.5}
\begin{table*}[htp]
\tbl{Partial preferences for an optimistic agent $j$.  BottomBundles$_j^\text{Opt}$ is defined in (\ref{equ:bundleopt}). ``Others in BottomBundles$_j^\text{Opt}$''  refers to $[\text{BottomBundles}_j^\text{Opt}\setminus\{(j,\ldots,j)\}]$.\label{tab:optapp}}{
\begin{tabular}{|c|r|c|}
\hline \multicolumn{2}{|c|}{Optimistic agent}& Order \\
\hline \multirow{2}{*}{$j\neq j_1$} & case 1:  $K_j=1$& $ ([Pred_{i_1}(j)]_{i_1},[j]_{-i_1})\succ\cdots\succ (j,\ldots, j)\succ \text{others in BottomBundles}_j^\text{Opt}$\\
\cline{2-3} &case 2: $K_j>1$ & $\begin{array}{r}([Pred_{i_1}(j)]_{i_1},[j]_{-i_1})\succ([Pred_{\mo_j(K_j)}(j)]_{\mo_j(K_j)},[j]_{-\mo_j(K_j)})\\
\succ\cdots\succ (j,\ldots, j)\succ \text{others in BottomBundles}_j^\text{Opt}\end{array}$\\
\hline \multirow{2}{*}{$j=j_1$} & case 1: $K_j=1$& $(j_1,\ldots, j_1)\succ ([L_{i_1}(n)]_{i_1},[j_1]_{-i_1})\succ \text{others}$\\
\cline{2-3}  & case 2: $K_j>1$& \begin{tabular}{r}$([Pred_{\mo_{j}(K_j)}(j_1)]_{\mo_j(K_j)},[j_1]_{-\mo_j(K_j)})\succ ([L_{i_1}(n)]_{i_1},[j_1]_{-i_1})$\\ $\succ\cdots\succ (j_1,\ldots, j_1)\succ \text{others in BottomBundles}_{j_1}^\text{Opt}$\end{tabular}\\
\hline 
\end{tabular}
}
\end{table*}

\begin{table*}[htp]
\tbl{Partial preferences for a pessimistic agent $j$. BottomBundles$_j^\text{Pes}$ is defined in (\ref{equ:bundlepes}).  For  $j\neq j_1$,  ``others in BottomBundles$_j^\text{Pes}$''  refers to $(\text{BottomBundles}_{j_1}^\text{Pes}\setminus\{(j,\ldots,j)\})$. For $j=j_1$, ``others in BottomBundles$_{j_1}^\text{Pes}$''  refers to $(\text{BottomBundles}_{j_1}^\text{Pes}\setminus\{(j_1,\ldots,j_1),([L_{i_1}(n)]_{i_1},[j_1]_{-i_1})\})$.\label{tab:pesapp}}{
\begin{tabular}{|r|c|}
\hline Pessimistic agent& Order\\
\hline  $j\neq j_1$ & $([Pred_{i_1}(j)]_{i_1},[j]_{-i_1})\succ\cdots\succ (j,\ldots, j)\succ \text{others in BottomBundles}_j^\text{Pes}$\\
\hline  $j= j_1$ & $\begin{array}{r}([L_{i_1}(n)]_{i_1},[j_1]_{-i_1})\succ\cdots\succ (j_1,\ldots, j_1)\succ \text{others in BottomBundles}_{j_1}^\text{Pes}\\\succ (L_{i_1}(n),\ldots,L_{i_1}(n))\end{array}$\\
\hline 
\end{tabular}
}
\end{table*}

We first introduce some notation that will be useful to define the profile in Step 1 and Step 2. Let $\mo(1)=(j_1,i_1)$. That is, agent $j_1$ is the first to choose an item in the sequential allocation, and she chooses from category $D_{i_1}$. Let $L_{i_1}$ denote the order over $\{1,\ldots,n\}$ representing the order for the {\em agents} to choose items from $D_{i_1}$ in $\mo$. That is, $j\rhd_{L_{i_1}} j'$ if and only if $(j,i_1)\rhd_\mo (j',i_1)$. By definition we have $j_1=L_{i_1}(1)$. For any $j\leq n$, we let $Pred_{i_1}(j)=L_{i_1}(L_{i_1}^{-1}(j)-1)$ denote the predecessor of agent $j$ in $L_{i_1}$, that is, the latest agent who chose an item from category $i_1$ before agent $j$ chooses from category $i_1$. If $j=1$, then we let the last agent in $L_{i_1}$ be her predecessor, that is, $Pred_{i_1}(1)=L_{i_1}(n)$.

\noindent{{\bf Step 1: define bottom bundles}}. In order to match the upper bounds shown in the proof of Proposition~\ref{prop:worstcase}, the bundles described in the proof of Proposition~\ref{prop:worstcase} must be the {\em only} bundles that are ranked below $(j,\ldots,j)$ by agent $j$. This is the part of the profile we will construct in the first step.

For all $i$ and $t$, we first define $D_{i,t}^*$ to be the subset of $D_i=\{1,\ldots, n\}$ such that $q\in D_{i,t}^*$ if and only if agent $q$ has not chosen an item from $D_i$ before the $t$-th round. By definition, if $\mo(t)=(j,i)$ then $j\in D_{i,t}^*$. Formally,
$$D_{i,t}^*=
\{q\leq n: \mo^{-1}(q,i)\geq t\}$$
We note that $D_{i,t}^*$ is defined solely by $i,t$, and $\mo$, which means that it does not depend on agents' preferences and behavior in previous rounds. Later in this proof we will show that for our constructed profile, in each round $(j,i)$ the active agent $j$ will choose $j$ from $D_i$, so that $D_{i,t}^*$ is the remaining items for category $i$ at the beginning of round $t$ of the sequential allocation. 


For any $1\leq l\leq p$, we let $t_{j,l}^*=\mo^{-1}(j,\mo_j(l))$. That is, $t_{j,l}^*$ is the round where agent $j$ chooses an item from the $l$-th category in $\mo_j$, which is not necessarily category $l$. For each agent $j$ we specify their bottom bundles as follows.
\begin{itemize}
\item If agent $j$ is optimistic, then we let the following bundles be ranked in the bottom of her preferences: 
\begin{equation}\label{equ:bundleopt}
\begin{split}
&\text{BottomBundles}_j^\text{Opt} =\\& (j_{\mo_j(1)},\ldots,j_{\mo_j(K_j-1)})\times \prod_{l=K_j}^pD^*_{{\mo_j(l)},t_{j,l}^*},\end{split}\end{equation}
where $(j,\ldots, j)$ is ranked on the top of these bundles, and the order over the remaining bundles is defined arbitrarily.  It follows that $(j,\ldots, j)$ is ranked in the $(\prod_{l=K_j}^pk_{j,O_j(l)})$-th position from the bottom by agent $j$.

\item If agent $j$ is pessimistic, then we first define the following bundles:
\begin{equation}\label{equ:bundlepes}
\begin{split}
&\text{BottomBundles}_j^\text{Pes} =\\
&\bigcup_{l=1}^p\bigcup\nolimits_{d\in D^*_{{\mo_j(l)},t_{j,l}^*}}\{([d]_{\mo_j(l)},[j]_{-\mo_j(l)})\},\end{split}\end{equation}
where $[j]_{-\mo_j(l)}$ means that all components except the $\mo_j(l)$-th component is $j$. 
Bundles in $\text{BottomBundles}_j^\text{Pes}$ are (partially) ranked as follows: first, $(j,\ldots,j)$ is ranked on the top; then, for any $1\leq l_1<l_2\leq p$ and any $d_1\in D^*_{{\mo_j(l_1)},t_{j,l_1}^*}$ and  $d_2\in D^*_{{\mo_j(l_2)},t_{j,l_2}^*}$ with $d_1\neq j$ and $d_2\neq j$, we rank $([d_1]_{\mo_j(l_1)},[j]_{-\mo_j(l_1)})$ below $([d_2]_{\mo_j(l_2)},[j]_{-\mo_j(l_2)})$. 

\begin{itemize}
\item If $j\neq j_1$, then we simply let $\text{BottomBundles}_j^\text{Pes}$ (with the partial orders specified above) be the bundles ranked in the bottom position.

\item If $j=j_1$, then we move $([Pred_{i_1}(j)]_{i_1},[j]_{-i_1})=([L_{i_1}(n)]_{i_1}, [j_1]_{-i_1})$ to the bottom place  and replace it by $(Pred_{i_1}(j),\ldots,Pred_{i_1}(j))=(L_{i_1}(n),\ldots,L_{i_1}(n))$, and then let these be ranked in the bottom positions of agent $j$'s preferences. That is, the bottom bundles are: $
(j_1,\ldots, j_1)\succ (\text{BottomBundles}_j^\text{Pes}\setminus\{(j_1,\ldots, j_1),([L_{i_1}(n)]_{i_1},[j_1]_{-i_1})\})
\succ (L_{i_1}(n),\ldots,L_{i_1}(n))
$
\end{itemize}

In both cases $(j,\ldots, j)$ is ranked at the $(1+\prod_{l=K_j}^p(k_{j,O_j(l)}-1))$-th position from the bottom.\end{itemize}

\noindent{{\bf Step 2: define top bundles}}.  We now specify the top two bundles (sometimes only the top bundle) for optimistic agents, and show that they are compatible with our constructions in Step 1.  For any optimistic agent $j$:
\begin{itemize}
\item When $j\neq j_1$, there are following two cases:

\begin{itemize}
\item case 1: $K_j=1$. We let $([Pred_{i_1}(j)]_{i_1},[j]_{-i_1})$ be the top-ranked bundle of agent $j$. 
\item case 2: $K_j>1$. We let $([Pred_{i_1}(j)]_{i_1},[j]_{-i_1})$ be the top-ranked bundle of agent $j$. Moreover, if $i_1 \neq \mo_j(K_j)$, then we rank $([Pred_{\mo_j(K_j)}(j)]_{\mo_j(K_j)},[j]_{-\mo_j(K_j)})$ at the second position. We recall that $Pred_{\mo_j(K_j)}(j)$ is the predecessor of $j$ in $L_{\mo_j(K_j)}$, the order for the {\em agents} to choose items from $D_{\mo_j(K_j)}$.
\end{itemize} These do not conflict with the preferences specified in Step 1 because item $Pred_{i_1}(j)$ in $D_{i_1}$ is not available for agent $j$ when she is about to choose an item in $D_{i_1}$, and item $Pred_{\mo_j(K_j)}(j)$ in $D_{\mo_j(K_j)}$ is not available for agent $j$ when she is about to choose an item in $D_{\mo_j(K_j)}$. Hence, none of these bundles are in BottomBundles$_j^\text{Opt}$.
\item When $j=j_1$,  there are the following two cases:
\begin{itemize}
\item case 1: $K_j=1$. Since $(j_1,i_1)=\mo(1)$, for all $i$, $D^*_{i,\mo^{-1}(j,i)}=D_i$, which means that agent $j$ is guaranteed to get her top-ranked bundle after the sequential allocation. In this case we let $(j,\ldots, j)$ be agent $j$'s top-ranked bundle and let $([L_{i_1}(n)]_{i_1},[j]_{-i_1})$ be ranked in agent $j$'s second position.  These do not conflict with the preferences specified in Step 1 because in this case Step 1 only specifies that $(j,\ldots,j)$ be ranked in the top position.
\item case 2:  $K_j>1$. We rank $([Pred_{\mo_j(K_j)}(j)]_{\mo_j(K_j)},[j]_{-\mo_j(K_j)})$ at the top position. We then rank $([L_{i_1}(n)]_{i_1},[j]_{-i_1})$ at the second position. Since $i_1=\mo_j(1)$, we have $\mo_j(K_j)\neq i_1$, otherwise $K_j=1$. Hence, $([Pred_{\mo_j(K_j)}(j)]_{\mo_j(K_j)},[j]_{-\mo_j(K_j)})\neq ([L_{i_1}(n)]_{i_1},[j]_{-i_1})$. These do not conflict with the preferences specified in Step 1 because category $i_1$ is agent $j_1$'s first category in $\mo_{j_1}$, which means that $i_1<K_j$, thus $([L_{i_1}(n)]_{i_1},[j]_{-i_1})\not\in\text{BottomBundles}_{j_1}^\text{Opt}$; also $Pred_{\mo_j(K_j)}(j)$ is not available when agent $j_1$ is about to choose an item for category $\mo_j(K_j)$, which means that $([Pred_{\mo_j(K_j)}(j)]_{\mo_j(K_j)},[j]_{-\mo_j(K_j)})\not\in\text{BottomBundles}_{j_1}^\text{Opt}$.
\end{itemize}
\end{itemize}

For any pessimistic agent $j$, we simply let her top-ranked bundle be $([Pred_{i_1}(j)]_{i_1},[j]_{-i_1})$ (we recall that $Pred_{i_1}(j_1)=L_{i_1}(n)$). We claim that preferences specified in the second step do not conflict with preferences specified in the first step for bottom bundles.  
\begin{itemize}
\item If $j\neq j_1$, then we need to show that $([Pred_{i_1}(j)]_{i_1},[j]_{-i_1})\not\in\text{BottomBundles}_j^\text{Pes}$. When agent $j$ is about to choose her item from $D_{i_1}$, agent $Pred_{i_1}(j)$ has already chosen her item from $D_{i_1}$, which means that $Pred_{i_1}(j)$ is unavailable for agent $j$. This means that $([Pred_{i_1}(j)]_{i_1},j_{-i_1})\not\in\text{BottomBundles}_j^\text{Pes}$.
\item If $j=j_1$, then by definition (see Table~\ref{tab:pesapp}) $([L_{i_1}(n)]_{i_1},[j_1]_{-i_1})$ is replaced by $(L_{i_1}(n),\ldots,L_{i_1}(n))$ in {BottomBundles}$_{j_1}^\text{Pes}$, which means that it can be ranked in the top.
\end{itemize}

\noindent{{\bf Step 3: extend to full profile}}. For any $j$, let $R_j$ be an arbitrary linear order over  $\md$ that satisfies all constraints defined in the previous two steps (see Table~\ref{tab:optapp} and~\ref{tab:pesapp}). Let $P=(R_1,\ldots,R_n)$. 

We now show by induction on the round in the sequential allocation mechanism, denoted by $t$, that if we apply the sequential allocation $\mo$ to $P$, then for all $j\leq n$, agent $j$ gets $(j,\ldots, j)$. 

When $t=1$, agent $j_1$ chooses an item from $D_{i_1}$. If $j_1$ is optimistic, then it is not hard to check that the $i_1$-th component of the top-ranked bundle of $R_{j_1}$ is $j_1$ (the top-ranked bundles are $(j,\ldots, j)$ and $([j']_{\mo_j(K_j)},[j]_{-\mo_j(K_j)})$, for case 1 ($K_{j_1}=1$) and case 2 ($K_{j_1}>1$), respectively. If agent $j_1$ is pessimistic, then for any $d\in D_{i_1}$ with $d\neq j_1$, there exists a bundle whose $i_1$th component is $d$ and is ranked below any bundle whose $i_1$th component is $j_1$. More precisely, if $d\neq Pred_{i_1}(j_1)=L_{i_1}(n)$, then such a bundle is $([d]_{i_1},[j]_{-i_1})$; if $d=Pred_{i_1}(j_1)=L_{i_1}(n)$, then such a bundle is $(L_{i_1}(n),\ldots,L_{i_1}(n))$. In both cases a pessimistic agent $j_1$ will choose item $j_1$ from $D_{i_1}$.

Suppose in every round before round $t$, the active agent $j$ chose item $j$ from the designated category. Let $\mo(t)=(j,i)$. If $j$ is optimistic, then we show in the following four cases that she will choose item $j$ from $D_i$ in round $t$. 
\begin{itemize}
\item $j\neq j_1$, $K_j=1$. In this case $j$ is guaranteed to get her top-ranked available bundle. It is not hard to check that the available bundles are a subset of BottomBundles$_j^\text{Opt}$, where $(j,\ldots,j)$ is available and is ranked in the top. Therefore agent $j$ will choose item $j$.
\item $j\neq j_1$, $K_j>1$. There are the following cases: 
\begin{enumerate}
\item Agent $Pred_{i_1}(j)$ has not chosen her item from $D_{i_1}$. In this case the top-ranked bundle $([Pred_{i_1}(j)]_{i_1},[j]_{-i_1})$ is still available by the induction hypothesis. 

\item Agent $Pred_{i_1}(j)$ has chosen an item from $D_{i_1}$ and $Pred_{\mo_j(K_j)}(j)$ has not chosen her item from $D_{\mo_j(K_j)}$. By the induction hypothesis, agent $Pred_{i_1}(j)$ chose item $Pred_{i_1}(j)$  from category $D_{i_1}$, which means that $([Pred_{i_1}(j)]_{i_1},[j]_{-i_1})$ is unavailable. The bundle $([Pred_{\mo_j(K_j)}(j)]_{\mo_j(K_j)},[j]_{-\mo_j(K_j)})$ becomes the top-ranked available bundle due to the induction hypothesis, whose $j$-th component is $j$.

\item $Pred_{i_1}(j)$ has chosen item $Pred_{i_1}(j)$ from $D_{i_1}$ and $Pred_{\mo_j(K_j)}(j)$ has chosen her item from $D_{\mo_j(K_j)}$. In this case, we first claim that $\mo_j^{-1}(i)\geq K_j$. For the sake of contradiction suppose $\mo_j^{-1}(i)< K_j$. Then, by the definition of $Pred_{\mo_j(K_j)}$, no agent chooses an item from $D_{\mo_j(K_j)}$ between round $\mo_j^{-1}(i)$ and $t_{j,K_j}^*$. We recall that $t_{j,K_j}^*$ is the round when agent $j$ chooses an item from $D_{\mo_j(K_j)}$. However, this violates the minimality of $K_j$ since no agent chooses an item from $D_{\mo_j(K_j)}$ between round $t_{j,K_j-1}^*>\mo_j^{-1}(i)$ and $t_{j,K_j}^*$. Hence, we must have that $\mo_j^{-1}(i)\geq K_j$. By the induction hypothesis, the available bundles are a subset of BottomBundles$_j^\text{Opt}$ and $(j,\ldots,j)$ is still available and is ranked at the top, which means that agent $j$ will choose item $j$ from $D_i$.
\end{enumerate}

In all  three cases above, the $i$th component of the top-ranked available bundle is $j$, which means that agent $j$ will choose item $j$.

\item $j= j_1$, $K_j=1$. By the induction hypothesis, the top-ranked bundle $(j,\ldots,j)$ is still available, which means that agent $j$ will choose item $j$.
\item $j= j_1$, $K_j>1$. If agent $Pred_{\mo_j(K_j)}(j)$ has not chosen her item from $D_{\mo_j(K_j)}$, then by the induction hypothesis the top bundle $([Pred_{\mo_j(K_j)}(j)]_{\mo_j(K_j)},[j]_{-\mo_j(K_j)})$ is still available and $i\neq \mo_j(K_j)$.  If agent $Pred_{\mo_j(K_j)}(j)$ has chosen item $Pred_{\mo_j(K_j)}(j)$ from $D_{\mo_j(K_j)}$, then by the induction hypothesis the available bundles are a subset of BottomBundles$_j^\text{Opt}$ with $(j,\ldots,j)$ ranked at the top. In both cases the $i$th component of the top-ranked available bundle is $j$. Therefore agent $j$ will choose item $j$.
\end{itemize}

If agent $j$ is pessimistic, then by the induction hypothesis the available items in $D_i$ are $D_{i,t}^*$, and $j\in D_{i,t}^*$. For any $d\in D_{i,t}^*$ with $d\neq j$, $([d]_i,[j]_{-i})$ is still available and is ranked lower than any {\em available} bundle whose $i$-th component is $j$ in BottomBundles$_j^\text{Pes}$ . Therefore, a pessimistic agent $j$ will choose item $j$ in this round.


It follows that after the sequential allocation, for all $j\leq n$,  agent $j$ gets $(j,\ldots,j)$. It is not hard to verify that conditions 1 and 2 hold.

To show that condition 3 holds, consider the allocation where agent $j$ gets $([Pred_{i_1}(j)]_{i_1},[j]_{-i_1})$. In this allocation, all agents except $j_1$ get their top-ranked bundle, and $j_1$ gets her top-ranked bundle (if $j_1$ is pessimistic) or second-ranked bundle (if $j_1$ is optimistic). This proves the theorem.
\end{proof}

\subsection{Proof of Proposition~\ref{prop:strategicn=2}}

\begin{proof} We construct $P$ by induction on $p$. When $p=1$ the proposition obviously holds. Suppose the proposition holds for $p'$. For any CSAM $f_\mo$ for $p'+1$ categories, w.l.o.g.~in the first round agent $1$ chooses from category $p'+1$. Let $\mo'$ denote the order over categories $1,\ldots, p'$ in $\mo$. $\mo'$ is well-defined because once agent $1$ chooses an item from $D_{p'+1}$, agent $2$'s item from $D_{p'+1}$ is determined, so that the position of $(2,p'+1)$ in $\mo$ does not matter. By the induction hypothesis, let $R_1' = [T_1'\succ b_1'\succ B_1']$ and $R_2'=[T_2'\succ b_2'\succ B_2']$ denote profiles over $D_1\times\cdots\times D_{p'}$ such that $b_1'$ (respectively, $b_2'$) is the bundle allocated to agent $1$ (respectively, agent $2$) in the SPNE of  $f_{\mo'}$, and $|B_1'|+1=\sum_{i=1}^{p'}k_{1,i}$, $|B_2'|+1=\sum_{i=1}^{p'}k_{2,i}$.

Let $R_1 = [(T_1',1)\succ (T_1',2 )\succ (b_1',1)\succ(b_1',2)\succ(B_1',1)\succ (B_1',2 )]$ and $R_2 = [(T_2',1)\succ (b_2',1)\succ (B_2',1 )\succ (T_2',2 )\succ (b_2',2)\succ (B_2',2 ) ]$, where $(T_1',1)$ is a ranking over $\{(\vec d,1): \vec d\in T_1'\}$ such that $(\vec d,1)\succ (\vec e,1)$ if and only if $\vec d\succ_{T_1'} \vec e$. Therefore, if agent $1$ chooses item $1$ from category $p'+1$ in the first round, then in the remaining rounds both agents act as if their preferences are $R_1'$ and $R_2'$, respectively, and the CSAM is $\mo'$. In this case agent $1$ will get $(b_1',1)$. Similarly, if agent $1$ chooses item $2$ from category $p'+1$ in the first round, then she will get $(b_1',2)$. Since agent $1$ prefers $(b_1',1)$ to $(b_1',2)$, she will choose item $1$ in the first round, and the final allocation is: agent $1$ gets $(b_1',1)$, whose rank is $n^{p'+1}-2|B_1'|-1$, and agent $2$ gets $(b_2',2)$, whose rank is $n^{p'+1}-|B_2'|$. We note that $k_{1,p'+1}=2$ and $k_{2,p'+1}=1$. This proves the proposition for $p'+1$. Therefore, the proposition holds for all $p$ and all CSAMs.
\end{proof}

\end{document}